\documentclass[11pt]{article}

\usepackage[letterpaper,margin=1in]{geometry}
\usepackage{amsmath,amssymb,amsthm,mathtools}
\usepackage{MnSymbol}
\usepackage{nicefrac}
\usepackage{microtype}
\usepackage[round,authoryear]{natbib}
\usepackage{hyperref}
\usepackage{enumitem}
\usepackage{booktabs}
\usepackage{complexity}

\usepackage[ruled,vlined,linesnumbered]{algorithm2e}

\usepackage[table]{xcolor}
\definecolor{payoffgray}{gray}{0.92}
\definecolor{citeblue}{HTML}{5487B3}
\definecolor{linkorange}{HTML}{C58145}
\hypersetup{
    colorlinks=true,
    citecolor=citeblue,
    linkcolor=linkorange,
    urlcolor=linkorange
}

\usepackage{tikz}
\usepackage{colortbl}
\usetikzlibrary{arrows.meta}
\definecolor{BIblue}{HTML}{245878}
\definecolor{BIred}{HTML}{A54032}
\definecolor{BIgray}{HTML}{5B6268}

\newtheorem{maintheorem}{Main Theorem}
\newtheorem{theorem}{Theorem}[section]

\newtheorem{corollary}[theorem]{Corollary}

\newtheorem{example}[theorem]{Example}
\theoremstyle{definition}
\newtheorem{definition}[theorem]{Definition}
\newtheorem{assumption}[theorem]{Assumption}
\usepackage[nameinlink,noabbrev]{cleveref}
\crefname{maintheorem}{main theorem}{main theorems}
\Crefname{maintheorem}{Main Theorem}{Main Theorems}
\crefname{assumption}{assumption}{assumptions}
\Crefname{assumption}{Assumption}{Assumptions}

\usepackage{thm-restate}

\newcommand{\Acal}{\mathcal A}
\newcommand{\Mcal}{\mathcal M}
\newcommand{\Scal}{\mathcal S}
\newcommand{\supp}{\operatorname{supp}}
\newcommand{\DeltaSet}{\mathop{\rm \Delta}}
\newcommand{\defeq}{\vcentcolon=}
\newcommand{\veca}{\boldsymbol{a}}
\newcommand{\vecell}{\boldsymbol{\ell}}
\newcommand{\vecpi}{\boldsymbol{\pi}}
\newcommand{\vecx}{\boldsymbol{x}}
\newcommand{\matM}{\mathbf{M}}

\let\bm\boldsymbol

\def\va{{\bm{a}}}

\renewcommand\vec\bm

\def\cA{{\mathcal{A}}}

\def\cG{{\mathcal{G}}}

\def\cM{{\mathcal{M}}}

\let\eps\epsilon

\let\E\relax
\DeclareMathOperator*{\E}{\mathbb E}

\usepackage{xcolor}

\definecolor{briancolor}{rgb}{1.0,0.5,0.1}

\title{Normal-Form Correlation in Markov Games}
\author{%
    Ioannis Anagnostides\textsuperscript{1},
    Constantinos Daskalakis\textsuperscript{2},
    Gabriele Farina\textsuperscript{2}\\
    Noah Golowich\textsuperscript{3},
    Tuomas Sandholm\textsuperscript{1}\thanks{Additional affiliations: Strategy Robot, Inc., Strategic Machine, Inc., Optimized Markets, Inc.},
    and Brian Hu Zhang\textsuperscript{2}\\[0.75em]
    \small \textsuperscript{1}Carnegie Mellon University\\
    \small \textsuperscript{2}Massachusetts Institute of Technology\\
    \small \textsuperscript{3}University of Texas at Austin\\[0.5em]
    \small \texttt{\{ianagnos,sandholm\}@cs.cmu.edu}\\
    \small \texttt{\{costis,gfarina,zhangbh\}@mit.edu},
    \texttt{nzg@cs.utexas.edu}%
}
\date{}

\begin{document}
\pagenumbering{gobble}
\maketitle
\begin{abstract}
There has been a surge of recent work on correlated equilibrium concepts in Markov games. However, existing results focus on concepts weaker than \emph{normal-form correlated equilibria (NFCEs)}, leaving open the more challenging question of computing such equilibria, which goes back to the seminal work of Papadimitriou and Roughgarden (JACM'08). Here, we establish the first efficient algorithm for NFCEs in finite-horizon Markov games with a fixed number of players $n$. In particular, with $S$ states, horizon $H$, and at most $A$ actions per player, it computes an $\epsilon$-NFCE in time $S(AH/\epsilon)^{O(n)}$. This is the first algorithm polynomial in $1/\epsilon$ and the description of the game for NFCEs in an interesting class of problems beyond the normal-form setting. Moreover, under the usual assumption that recommendations are independent across states, we complement our positive result by showing \PPAD-completeness---that is, computational equivalence to Nash equilibria---either in many-player games or when the precision is exponentially small.

The key idea behind our approach is to run backward induction on a sequence of auxiliary stage games, but with the twist that in each step we compute a \emph{constant-expectation correlated equilibrium}. This is a natural refinement of correlated equilibrium in which the conditional expected payoff from obeying is independent of the recommendation. In fact, our reduction goes both ways, establishing an equivalence between constant-expectation CEs and NFCEs in Markov games. For a fixed number of players, we observe that a constant-expectation CE can be computed approximately by combining linear programming with suitable discretization. In contrast, it is \PPAD-hard in i) polymatrix (many-player) games at constant precision, and ii) two-player games at exponentially small precision. The latter result follows from an unexpected connection to rank-2 two-player games.
\end{abstract}

\clearpage
\pagenumbering{arabic}
\section{Introduction}

The correlated equilibrium, introduced by~\citet{Aumann74:Subjectivity}, is a foundational notion in game theory. Unlike Nash equilibria~\citep{Nash51:Noncooperative}, the standard solution concept in noncooperative games, the set of correlated equilibria is the feasible region of a linear program, thereby enjoying more favorable computational properties. Indeed, a seminal result by~\citet{Papadimitriou08:Computing} shows that a correlated equilibrium can be computed in polynomial time even in certain multi-player games of \emph{polynomial type}, meaning that the number of pure strategies of each player is polynomial in the description of the input. However, in most realistic settings, the number of pure strategies of each player is exponentially large, since a pure strategy should prescribe how to act across a range of different contingencies. The complexity of correlated equilibria in such settings has been a long-standing open problem, first highlighted by~\citet{Papadimitriou08:Computing}.

An important class of games with a large number of contingency plans is \emph{Markov} (aka. stochastic) games. There has been much recent work examining correlated equilibrium concepts in (fully observable) Markov games (discussed further in~\Cref{sec:related}). However, existing algorithmic results focus on relaxations of the \emph{normal-form correlated equilibrium (NFCE)}, most notably notions akin to extensive-form correlated equilibrium~\citep{vonStengel08:Extensive}, wherein recommendations are revealed only upon reaching an information set along the realized path. In the stronger normal-form notion, a mediator samples a complete policy profile beforehand and (privately) reveals to each player the entire policy recommendation. This means that a player may condition its deviation on recommendations at states that are never reached. In this paper, we focus on the following question.

\begin{quote}
    \centering
    \emph{Are there efficient algorithms for computing normal-form correlated equilibria (NFCEs) in Markov games?}
\end{quote}

NFCE is a suitable solution concept in systems where a mediator---a trusted third party---specifies complete policy prescriptions before deployment, with no need to conceal future recommendations or remain involved throughout execution. For example, a traffic coordinator could recommend complete routing policies to autonomous vehicles before departure, allowing each vehicle to inspect its full policy and decide whether to follow it without requiring further guidance during the trip. Moreover, determining whether efficient algorithms exist for NFCE calls for fundamentally new technical insights, which will deepen our understanding of correlated equilibria in dynamic environments.

\subsection{Our results}

Our main positive result is an efficient algorithm for this problem in finite-horizon Markov games with a fixed number of players $n$. 

\begin{maintheorem}
    \label{theorem:informal}
    Let $S$ be the number of states, $H$ the horizon, and $A$ the maximum number of actions per player. There is an algorithm that computes an $\epsilon$-NFCE whose running time is $S(AH/\epsilon)^{O(n)} \poly(L)$, where $L$ is the bit complexity of the input.
\end{maintheorem}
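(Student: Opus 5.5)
We prove \Cref{theorem:informal} by backward induction over the horizon, following the approach sketched in the abstract. We restrict attention to NFCEs whose recommendations are sampled independently across (state, time) pairs. At each pair $(s,h)$ the mediator samples a joint action profile from a distribution $\mu_{s,h}$ over $\prod_i \Acal_i$ and privately tells each player $i$ its recommended action $a_i$ for every state. Define the auxiliary stage game $G_{s,h}$ with payoffs $u_i(\veca) = r_i(s,\veca) + \E_{s'\sim P(\cdot\mid s,\veca)} V_i^{h+1}(s')$, where $V_i^{h+1}$ is the continuation value (on path) already computed for step $h+1$. We pick $\mu_{s,h}$ to be an (approximate) \emph{constant-expectation CE} of $G_{s,h}$. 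This means $\mu_{s,h}$ is a CE and, in addition, for every player $i$ and every recommendation $a_i$ in its support, $\E[u_i \mid a_i] = V_i^h(s)$ does not depend on $a_i$.

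The key structural step is showing that this yields an NFCE. A deviating player sees its whole recommended policy $(a_i^{s',h'})_{s',h'}$ and may condition on it in arbitrary ways. By independence across states, at a given $(s,h)$ the only information relevant to the opponents' play is $a_i^{s,h}$. The recommendations at other states affect the player only through the continuation value it expects upon reaching them. Under obedience, constant expectation makes that continuation value equal to $V_i^{h+1}(s')$ no matter what was recommended there. So the extra information gives no leverage, and we argue by induction from $h=H$ downward. Suppose that at step $h+1$ no deviation, even one conditioned on the full recommendation vector, beats $V_i^{h+1}$ by more than $(H-h)\epsilon'$. Then at step $h$ the deviator's value is at most $\max_{a'}\E[u_i(a',\veca_{-i})\mid a_i] + (H-h)\epsilon'$. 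The CE property together with constant expectation bounds this by $V_i^h(s)+(H-h+1)\epsilon'$. The subtle part is that a deviation may use the recommendation at state $s'$ when deciding how to act at $s$ (for instance, to steer toward $s'$ when its own recommended action there looks good). Constant expectation is exactly what neutralizes this: conditioned on any $a_i^{s',h+1}$, obeying at $s'$ yields the same value $V_i^{h+1}(s')$. A deviating player at $s'$ also gains nothing from knowing $a_i^{s',h+1}$ beyond what the CE constraints already control, because the opponents' actions are independent of everything else. Setting $\epsilon'=\epsilon/H$ gives an $\epsilon$-NFCE.

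It remains to compute an $\epsilon'$-approximate constant-expectation CE of each stage game in time $(AH/\epsilon)^{O(n)}\poly(L)$. The constraints are bilinear in $\mu$ and the unknown common value, so we discretize. For each player $i$ we guess the value $v_i$ on an $\epsilon'$-grid of $[0,H]$, which gives $(H/\epsilon')^n$ guesses. For each guess we solve a linear program over $\mu\in\DeltaSet(\prod_i\Acal_i)$ with the CE constraints $\sum_{\veca_{-i}}\mu(a_i,\veca_{-i})\,(u_i(a_i,\veca_{-i})-u_i(a',\veca_{-i}))\ge -\epsilon'\mu_i(a_i)$ and the constant-expectation constraints $\bigl|\sum_{\veca_{-i}}\mu(a_i,\veca_{-i})u_i(a_i,\veca_{-i}) - v_i\mu_i(a_i)\bigr|\le\epsilon'\mu_i(a_i)$, where $\mu_i(a_i)$ is the marginal probability of $a_i$. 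Both families of constraints are linear once $v_i$ is fixed, and the LP has $A^n$ variables. Existence of an exact constant-expectation CE, which I would prove by a fixed-point argument or obtain from a Nash equilibrium (a Nash equilibrium is trivially constant-expectation by the indifference principle), ensures that some grid point is feasible. Running this over all $S H$ stage games gives total time $S H\,(H/\epsilon')^n A^{O(n)}\poly(L) = S(AH/\epsilon)^{O(n)}\poly(L)$.

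I expect the main obstacle to be the approximation bookkeeping in the equivalence step. With only approximate constant expectation, the continuation value can differ by up to $\epsilon'$ depending on the recommendation, and a deviator can exploit this at every level. These errors must add up to $O(H\epsilon')$ and not compound multiplicatively. I would handle this by requiring slack $\epsilon'$ in the constant-expectation constraints, defining $V^h$ as the exact expected value under $\mu_{s,h}$, and carrying a single additive error term through the induction.
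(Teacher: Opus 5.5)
Your plan matches the paper's: per-state constant-expectation CEs found by a grid search over the common values plus an LP, independent sampling across time-indexed states, and an induction on the conditional MDP that bounds the deviator's value by $\max_{a'}\E[u_i(a',\veca_{-i})\mid a_i]$ plus the inductive slack. That structural argument is sound. However, one step in your complexity accounting fails as written. You feed the exact mean $V_i^{h}(s)=\E_{\mu_{s,h}}[u_i]$ of the LP solution into the stage-$(h-1)$ auxiliary games. A vertex solution of an LP with $A^n$ variables and $b$-bit coefficients only has a guaranteed bit size of roughly $A^n b$. So the available bound on the bit size of the continuation values, and hence of the next LP's coefficients, grows by a factor of about $A^n$ per stage, reaching $A^{O(nH)}L$ at $h=1$. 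The later stage games therefore do not have bit complexity $L$, and your total $SH(H/\epsilon')^nA^{O(n)}\mathrm{poly}(L)$ is not justified. The paper avoids this by propagating the grid point itself: $g_{i,t,s}(\veca)=r_{i,t}(s,\veca)+\sum_{s'}\mathbb{P}_t(s'\mid s,\veca)\,\ell_{i,t+1}(s')$ with $\ell_{i,t+1}(s')\in G_\eta$. Every auxiliary game then has bit complexity $\mathrm{poly}(L,\log(H/\epsilon))$ regardless of $t$. You already have such a value, namely your guess $v_i$, and should propagate it (or a rounding of $V_i^h$) instead of the exact mean.

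The same device also fixes your error bookkeeping, whose stated constants do not follow from your constraints. You allow $\epsilon'$ slack in the CE constraints and a symmetric $\epsilon'$ window around $v_i$, while $V_i^h$ is the exact mean. Then the deviation bound picks up about $3\epsilon'$ per stage, not $\epsilon'$. The obedient conditional value can also sit about $2\epsilon'$ per stage below $V_i^h$. So you need a separate lower-bound induction for obedience, because ``obeying at $s'$ yields $V_i^{h+1}(s')$'' holds only in the exact case. The total gap is then about $5H\epsilon'$, which calls for $\epsilon'=\epsilon/(5H)$ rather than $\epsilon/H$. The paper instead imposes exact CE constraints and the one-sided window $\ell_i p_i^\mu(a_i)\le w_i^\mu(a_i)\le(\ell_i+\eta)p_i^\mu(a_i)$, and propagates $\ell$. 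This yields $V_{i,t}^{\vecpi_i}(s)\ge\ell_{i,t}(s)$ with no drift and $V_{i,t}^{*,\vecpi_i}(s)\le\ell_{i,t}(s)+(H-t+1)\eta$ (\Cref{lem:value-bounds}), hence a gap of exactly $\epsilon$ with $\eta=\epsilon/H$.
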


This shows that, for a fixed number of players, there is a fully polynomial-time approximation scheme (FPTAS) for normal-form correlated equilibrium. More broadly, beyond polynomial-type games, the best known result is a PTAS~\citep{Dagan24:External,Peng24:Fast}. \Cref{theorem:informal} gives an exponential improvement in the dependence on $1/\epsilon$ over those results, and is the first FPTAS for computing NFCE in an interesting class of games beyond the normal-form setting.

One limitation of our positive result is that it scales poorly with the number of players; the presence of multiple players is a well-known challenge in multiagent reinforcement learning (\emph{e.g.}~\citealp{Cui23:Breaking,Wang23:Breaking}). We complement~\Cref{theorem:informal} by showing that, under \emph{state independent recommendations}, it cannot be significantly improved.

\begin{maintheorem}
\label{theorem:informal-hard}
Computing an $\eps$-NFCE with independent state recommendations is \PPAD-hard even with horizon $H=2$, in each of the following cases:
\begin{itemize}
    \item[(i)] $(n+1)$-player succinct games, two actions per player at each state, and $\eps=\Theta(1/n)$.
    \item[(ii)] Three-player games, a constant number of states, and exponentially small $\eps$.
\end{itemize}
\end{maintheorem}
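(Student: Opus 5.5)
The plan is to go through the equivalence sketched in the abstract: NFCEs in Markov games with state-independent recommendations correspond to constant-expectation correlated equilibria (CE-CEs) of auxiliary stage games. So it suffices to (a) show that CE-CE is \PPAD-hard in the two stated regimes, namely polymatrix games at constant precision and two-player games at exponentially small precision, and (b) embed a stage game into a horizon-$H=2$ Markov game so that any $\eps$-NFCE with independent state recommendations induces an $O(\eps)$-CE-CE of that stage game.

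For (b), the Markov game starts at an initial state $s_0$ where the $n$ original players play the hard stage game $G$. From $s_0$ the game moves to a second-step state whose continuation lets a player who deviated at $s_0$ exploit knowledge of its recommendation at $s_0$. The natural device is an extra player, which accounts for the $(n+1)$ players in (i) and the three players in (ii). At the second step, each original player $i$ chooses between a safe action and a gamble whose payoff depends on the extra player's action. Because recommendations are state-independent, the second-step recommendation carries no information about the first-step one, so the continuation value is the same across recommendations. Normal-form correlation, however, lets player $i$ condition its second-step play on its first-step recommendation $a_i$. If $\E[u_i \mid a_i]$ varies with $a_i$, the player can profitably switch between safe and gamble depending on $a_i$, with the extra player's incentives calibrated so that the gamble's value sits between these conditional expectations. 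Equilibrium therefore forces each player's conditional expected obedience payoff at $s_0$ to be nearly constant across recommendations, while the usual obedience constraints at $s_0$ give the CE conditions. The point that needs careful design is that the extra player's payoff must pin its mixed action to a value that cannot be circumvented. I expect to make that player's action encode a ``threshold'' equal to the realized average payoff, for instance by paying it to match a guess of player $i$'s stage payoff.

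For (a), case (i) reduces from $\eps$-Nash in polymatrix games, which is \PPAD-hard at constant precision. The aim is to show that in a suitably transformed polymatrix game (e.g., a two-action generalized matching-pennies gadget per player), constant-expectation together with correlated-equilibrium conditions forces each player's recommendation to be nearly independent of the opponents' actions in terms of the payoff-relevant marginal. With two actions, constant conditional expectation says $\E[u_i\mid a_i=0]=\E[u_i\mid a_i=1]$. Because payoffs are linear in neighbors' marginals conditioned on $a_i$, and the gadget makes a player indifferent exactly at Nash, the conditional marginals yield an approximate Nash equilibrium. Case (ii) goes through the rank-2 connection: CE-CE in a two-player game $(A,B)$ should correspond to Nash in a related game of rank $2$, since constant expectation imposes linear constraints tying $A+B$ to a low-rank structure. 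Rank-2 Nash is \PPAD-hard at exponential precision (Mehta), and exponential precision carries over through the embedding.

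The main obstacle is the hardness of CE-CE itself, especially the rank-2 step. Correlation usually makes equilibria easy, so I must pin down exactly which extra constraints restore the fixed-point structure. I would first try to characterize CE-CEs of $2\times m$ or zero-sum-plus-rank-one games explicitly and match them to equilibria of a rank-2 game.
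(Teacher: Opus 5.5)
\textbf{Gap 1: the embedding runs in the wrong direction.} Your top-level decomposition matches the paper's: hardness of constant-expectation CE, plus an embedding with one extra player. But the embedding as you describe it fails. With $\mathcal{G}$ at the first stage, when player $i$ acts at stage~2 it has already observed the realized profile at $s_0$, since histories contain joint actions. Its stage-1 reward is sunk. By state independence, $a_i$ is independent of the opponents' stage-2 recommendations. So $\E[u_i\mid a_i]$ enters no stage-2 incentive, and conditioning the safe/gamble choice on $a_i$ gains nothing; your own remark that ``the continuation value is the same across recommendations'' says exactly this.

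Concretely, suppose the transition out of $s_0$ ignores the action profile. Then any CE of $\mathcal{G}$ at $s_0$, combined with a Nash equilibrium of the gadget, is an exact NFCE, with or without constant expectation. Normal-form correlation only lets a player exploit recommendations at \emph{later} states while acting at \emph{earlier} ones, as in the Bach-or-Stravinsky example, so the test must come before $\mathcal{G}$.

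The paper does this as follows.
\begin{itemize}
\item At stage~1 a uniformly random $i\in[n]$ is selected, with one state per player. This is why a single tester suffices, and why the transfer loses a factor $n$, matching $\eps=\Theta(1/n)$ in (i).
\item Player $i$ plays matching pennies against the tester; heads leads to $\mathcal{G}$ and tails ends the game.
\item Let $p_i$ be the tester's heads probability and $v_i=\E_\mu[u_i]$. Conditioning heads/tails on its $\mathcal{G}$-recommendation gives player $i$ the payoff $\max\{2p_i+\E[u_i\mid a_i],\,2(1-p_i)\}$.
\item The average of this exceeds the matching-pennies value $1+v_i/2$ by $\frac12\sum_{a_i}p_i^\mu(a_i)\,|\E[u_i\mid a_i]-(2-4p_i)|$. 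The $\eps$-NFCE condition at a state reached with probability $1/n$ bounds this by $2n\eps$, giving constant expectation around $2-4p_i$.
\item Once entry to $\mathcal{G}$ is gated, you also need the tester's deviation to heads. It shows heads is played with probability at least $1/4$, so ex-ante $\eps$-obedience at $\mathcal{G}$ yields a $4\eps$-CE.
\end{itemize}

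\textbf{Gap 2: hardness of constant-expectation CE is left unresolved.} You flag this as the main obstacle, and in both cases the decisive idea is missing.

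For (i), constant expectation alone does not pin down Nash: the uniform mixture of $(\mathsf U,\mathsf L)$ and $(\mathsf D,\mathsf R)$ in the example's coordination game is a constant-expectation CE. So the transformation is the whole point. Let $d_i$ be the payoff difference between the two actions. The paper uses the strategically equivalent asymmetric rescaling $u_i'(0,\cdot)=\frac12+\alpha d_i$, $u_i'(1,\cdot)=\frac12+\frac12 d_i$ with $\alpha<\frac12$. Obedience makes $\E[d_i\mid a_i=0]$ approximately nonpositive and $\E[d_i\mid a_i=1]$ approximately nonnegative. Constant expectation forces $\frac12\E[d_i\mid a_i=1]\approx\alpha\E[d_i\mid a_i=0]$, and the sign mismatch makes both small. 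Polymatrix linearity, $\E_\mu[d_i]=u_i(1,\vecx_{-i})-u_i(0,\vecx_{-i})$, then makes the \emph{unconditional} marginals an approximate Nash equilibrium, not the conditional ones you mention.

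For (ii), the mechanism is a covariance bound. At the marginals of an $\eta$-CE of a rank-2 game with $u_1+u_2=f_1g_1+f_2g_2$, the total regret is at most $2\eta+\sum_i\operatorname{Cov}_\mu(f_i(a_1),g_i(a_2))$. Add the opponent-only terms $Kg_1(a_2)$ to player~1 and $Kf_2(a_1)$ to player~2; this leaves the CE set unchanged. Constant expectation then forces $\E[g_1(a_2)\mid a_1]\approx\E[g_1(a_2)]$ and $\E[f_2(a_1)\mid a_2]\approx\E[f_2(a_1)]$ on average, bounding each covariance by $8\eps+6/K$. Each player's constant-expectation constraint kills exactly one covariance term, which is why rank two is what is needed.

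The hardness source is therefore the Chen--Yannakakis result that exact Nash in rank-2 games is \PPAD-hard, combined with rounding an exponentially precise approximate equilibrium to an exact one. It is not Mehta's theorem, which concerns rank three and would leave a third covariance term uncontrolled by this argument.
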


That is, computing such an NFCE is as hard as finding a Nash equilibrium in general-sum games.

\subsection{Technical overview}

We now highlight the key technical insights of our approach.

\paragraph{Failure of backward induction.} The most natural approach that comes to mind is backward induction, whereby each continuation state is assigned its expected correlated equilibrium payoff over the local recommendations. However, the crux is that in an NFCE, each player observes its entire recommended policy in advance, so recommendations at future states may reveal conditional continuation payoffs that differ from the average over the local recommendations. This means that a player can profitably deviate by altering an earlier action to avoid a continuation state where its recommendation there signals a lower payoff.

More concretely, we consider a two-stage game in which Player 1 initially chooses between a terminal payoff of \(\nicefrac{5}{4} \) and a continuation state where the players play the game ``Bach or Stravinsky,'' with payoffs $(2,1)$ and $(1,2)$ on the diagonal and $(0,0)$ otherwise (\Cref{fig:failbackward}). Mixing equally between the profiles on the diagonal is a CE of the continuation game and gives Player 1 an expected payoff of \(\nicefrac{3}{2}\). As a result, backward induction based on this CE prescribes entering that state, because $\nicefrac{3}{2} > \nicefrac{5}{4}$. However, the full policy recommendation reveals which action profile should be played in the ``Bach or Stravinsky'' game. In particular, when Player 1 is recommended the action yielding payoff $2$ ($\mathsf{B}$), following the prescribed policy is optimal. On the other hand, when recommended the action yielding payoff $1$ ($\mathsf{S}$), it can instead choose the terminal payoff of $\nicefrac{5}{4}$, which is a profitable deviation. So, the correlated distribution resulting from backward induction is not an NFCE.

\begin{figure}[h]
    \centering
    \begin{tikzpicture}[
  font=\small,
  >={Stealth[length=1.5mm]},
  edge/.style={->,draw=black!65,line width=.6pt},
  edge label/.style={font=\footnotesize,inner sep=3pt}
]
\definecolor{CEshade}{HTML}{E7EEF2}

\node[circle,fill=black!75,inner sep=2pt,
      label=above:{Player 1}]
  (root) at (0,0) {};

\node (stop) at (-2,-1.45)
  {$u_1=\frac54$};

\node[align=center] (game) at (2,-1.65) {%
  \renewcommand{\arraystretch}{1.2}%
  $\begin{array}{c|cc}
       & \mathsf{B} & \mathsf{S} \\ \hline
     \mathsf{B} & \cellcolor{CEshade}(2,1) & (0,0) \\
     \mathsf{S} & (0,0) & \cellcolor{CEshade}(1,2)
   \end{array}$%
};

\draw[edge] (root)
  -- node[edge label,above left] {$\mathsf{Stop}$}
  (stop.north);

\draw[edge] (root)
  -- node[edge label,above right] {$\mathsf{Enter}$}
  (game.north);

\node[anchor=west,font=\footnotesize]
  at ([xshift=2mm,yshift=-2mm]game.east)
  {CE: $\frac12(\mathsf{B},\mathsf{B})
       +\frac12(\mathsf{S},\mathsf{S})$};

\node[font=\footnotesize,text=black] at (2,-2.85)
  {Expected payoff: $\frac32>\frac54$};
\end{tikzpicture}
    \caption{Backward induction prescribes $\mathsf{Enter}$ using the continuation CE's average payoff. The CE mixes equally between the two shaded profiles. Observing its full policy lets Player 1 stop whenever $\mathsf{S}$ is recommended, so the resulting distribution is not an NFCE.}
    \label{fig:failbackward}
\end{figure}

\paragraph{Constant-expectation correlated equilibria.} To address this, the key ingredient is a refinement of correlated equilibria in normal-form games, known as \emph{constant-expectation correlated equilibrium}. To our knowledge, this was first examined by~\citet{Sorin98:Distribution}, and subsequently used by~\citet{Heller12:Sequential}.\footnote{\citet{Sorin98:Distribution} used the term ``distribution equilibrium,'' but we use the more descriptive nomenclature of~\citet{Heller12:Sequential}. \citet{Heller12:Sequential} examined existence in the infinite-horizon setting, whereas we focus on computation.} In a one-shot game, a constant-expectation correlated equilibrium is a correlated equilibrium for which the conditional expected payoff from obedience \emph{does not depend on the recommended action}. Even though this is a natural notion, its complexity has not been examined before. For example, the correlated equilibrium constructed in~\Cref{fig:failbackward} which breaks backward induction does \emph{not} satisfy this property: for Player 1, the payoff conditional on observing $\mathsf{B}$ is 2, whereas the payoff conditional on observing $\mathsf{S}$ is 1. On the other hand, a Nash equilibrium clearly satisfies this requirement. However, unlike Nash equilibria, which are hard to compute~\citep{Daskalakis09:Complexity}, our key observation is that constant-expectation CEs admit an efficient algorithm (FPTAS) for a fixed number of players: we exhaustively search over each player's constant obedience payoff and solve a linear feasibility program for each candidate (\Cref{prop:grid-lp}).

\paragraph{Rectifying backward induction.} We now return to Markov games. The basic idea is to run backward induction by computing in each stage a constant-expectation correlated equilibrium. Then, the final distribution results from sampling the local recommendations independently across the states, which form pure Markovian policy profiles. This independence is crucial for the argument: after conditioning on a player's entire recommended policy, the other players' actions remain independent across states, and the environment induced by conditioning is a usual Markov decision process. The crucial role of constant expectation is that it makes conditional obedience payoffs approximately independent of the local recommendation. Together with independent sampling, this provides continuation values that remain valid after conditioning on the entire recommended policy, making backward induction sound (\Cref{lem:value-bounds}).

\paragraph{Complexity of constant-expectation CEs and implications for Markov NFCE.} We finally circle back to the proof of our main hardness result, \Cref{theorem:informal-hard}. The upshot is that, under independent state recommendations, our previous \emph{reduction goes both ways}: a constant-expectation CE reduces to an NFCE of a suitable Markov game (\Cref{thm:nfce-transfer}). Moreover, and perhaps surprisingly, we show that computing a constant-expectation CE is \PPAD-hard in the following cases.

\begin{maintheorem}
    \label{theorem:main-hardness}
    Computing a constant-expectation correlated equilibrium is \PPAD-hard in
    \begin{itemize}
        \item[(i)] binary-action $n$-player polymatrix games at constant precision, and
        \item[(ii)] two-player normal-form games at exponentially small precision.
    \end{itemize}
\end{maintheorem}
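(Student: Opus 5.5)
Both parts go through reductions that make every correlated equilibrium with constant expectation essentially a Nash equilibrium, or at least let a Nash equilibrium be read off from it efficiently. The two reductions below differ only in how they force this.

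\textbf{Part (i).} Start from the \PPAD-hardness of computing an $\epsilon$-Nash equilibrium in binary-action polymatrix games at constant precision (generalized-circuit hardness, Rubinstein). The key structural fact: in a binary-action game, a player $i$ has only two recommendations, $0$ and $1$. Write $\mu$ for the correlated distribution. Obedience says that, conditional on recommendation $a_i$, the action $a_i$ is (approximately) a best response to the conditional distribution of the others. Polymatrix utilities are linear in the others' marginals, so only the conditional marginals $\mu(a_j \mid a_i)$ matter. Constant expectation then says the obedience payoff is the same, say $v_i$, for $a_i = 0$ and $a_i = 1$.

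The plan is to modify the hard instance by adding gadget players. Each gadget makes player $i$'s payoff from one recommendation depend on how the other recommendation relates to the unconditional marginal. Constant expectation plus obedience then forces $\mu(a_j \mid a_i = 0) \approx \mu(a_j \mid a_i = 1)$ for each relevant neighbor, whenever $i$ genuinely mixes. Concretely, I would pair each original player with a ``copy'' or ``referee'' player who is paid to match $i$'s action. With only binary actions, the rest is bookkeeping:

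\begin{itemize}
\item Let $x_i = \Pr[a_i = 1]$.
\item The conditional payoff of action $1$ is a convex combination, over $i$'s recommendation, of $u_i(1, \mu(\cdot \mid a_i = 1))$ and $u_i(1, \mu(\cdot \mid a_i = 0))$.
\item Combining the two obedience inequalities with equality of the conditional values, I would show that action $1$ is a near-best response to the product of marginals whenever $x_i \in (\epsilon, 1 - \epsilon)$.
\item If instead $x_i$ is near $0$ or $1$, obedience under the rare recommendation costs only $O(\epsilon)$.
\end{itemize}

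The resulting marginals $(x_i)_i$ would then form an $O(\epsilon)$-Nash equilibrium of the original game.

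\textbf{Main obstacle (i).} The delicate step is showing that constant expectation, rather than plain obedience, is what forbids the coordinated correlation of CEs. My first attempt is the following argument. In a polymatrix game, the expected gain of $i$ from swapping $0 \leftrightarrow 1$ under the unconditional marginals decomposes as $x_i\,\Delta_i(1) + (1 - x_i)\,\Delta_i(0)$. Constant expectation is exactly what makes the cross terms cancel when the two conditional values are equal. If this fails, I would fall back on the gadget approach, which forces the relevant correlations to vanish.

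\textbf{Part (ii).} Use the rank-2 connection hinted at in the abstract: reduce from Nash in rank-2 two-player games, which Mehta showed \PPAD-hard at exponential precision, or from rank-1-plus-structure games. Given a game $(A, B)$, build a two-player game in which the constant-expectation condition for both players adds linear constraints that turn the correlated distribution into a product-like object. The constant-expectation equalities are
\[
\sum_j \mu(i,j)\, A_{ij} = v_1 \sum_j \mu(i,j) \quad \text{for every row } i,
\]
and similarly for columns. Together with obedience, these say that the row-conditional distributions of the column player lie in a polytope of dimension governed by the rank. The plan is to show that, in an appropriately lifted game whose payoffs $A + B$ have low rank, any such $\mu$ can be projected, at exponentially small precision, to a Nash equilibrium of a rank-2 game. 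Because the exact solution may be irrational or require exponential precision, the hardness holds only at exponentially small $\epsilon$; rounding is handled by standard bit-complexity arguments.
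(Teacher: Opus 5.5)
\textbf{Part (i).} The step you flag as delicate is the whole proof, and your ``first attempt'' fails. Write $d_i=u_i(1,\veca_{-i})-u_i(0,\veca_{-i})$. In the unmodified game, obedience gives $\mathbb{E}_\mu[d_i\mid a_i=0]\le 0\le\mathbb{E}_\mu[d_i\mid a_i=1]$ up to the obedience slack. But equality of the two conditional obedience payoffs imposes no relation between these two gaps, so nothing forces $\mathbb{E}_\mu[d_i]=u_i(1,\vecx_{-i})-u_i(0,\vecx_{-i})$ to be small. Concretely, take the coordination game $u_1=u_2=\mathbf{1}[a_1=a_2]$, which is a normalized binary-action polymatrix game, and put mass $p$ on $(1,1)$ and $1-p$ on $(0,0)$. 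This is an exact CE in which every conditional obedience payoff equals $1$, yet the marginals have regret $\min\{p,1-p\}\,|2p-1|$. So your third bullet cannot be derived, and since the gadget fallback is unspecified, part (i) currently has no argument.

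The missing idea is that constant expectation, unlike the CE constraints, is \emph{not} invariant under adding to $u_i$ a term depending only on $\veca_{-i}$. You can exploit this on the same graph with no new players. Set $u_i'(0,\cdot)=\tfrac12+\alpha d_i$ and $u_i'(1,\cdot)=\tfrac12+\tfrac12 d_i$ with $\alpha=\tfrac14+6\epsilon<\tfrac12$; this is $(\tfrac12-\alpha)u_i$ plus an opponent-only term. Obedience now pushes the conditional value $\tfrac12+\tfrac12\mathbb{E}_\mu[d_i\mid a_i=1]$ above $\tfrac12$ and $\tfrac12+\alpha\,\mathbb{E}_\mu[d_i\mid a_i=0]$ below $\tfrac12$. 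Constant expectation then bounds $\min\{p_i^\mu(0),p_i^\mu(1)\}\bigl(\tfrac12\mathbb{E}_\mu[d_i\mid a_i=1]-\alpha\,\mathbb{E}_\mu[d_i\mid a_i=0]\bigr)$ by $\epsilon$. That quantity is a sum of two approximately nonnegative terms, so both conditional gaps are small. Your polymatrix identity then yields an $(\epsilon/\alpha+\delta/(\alpha(1-2\alpha)))$-Nash equilibrium of the source game. This must stay below the source hardness threshold; the paper uses $1/12$ from Deligkas et al. Note that what you need is vanishing of these payoff-relevant conditional gaps, not near-independence $\mu(a_j\mid a_i=0)\approx\mu(a_j\mid a_i=1)$. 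That is more than necessary and not what the argument provides.

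\textbf{Part (ii).} Here there is no mechanism. ``Row-conditional distributions lie in a polytope of dimension governed by the rank'' and ``projected to a Nash equilibrium'' do not say how constant expectation pins anything down; the coordination example above is itself a bimatrix game with a non-product $\mu$. Two ingredients are missing.

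First, let $\mu$ be an $\eta$-CE of a game with $u_1+u_2=f_1(a_1)g_1(a_2)+f_2(a_1)g_2(a_2)$. Apply obedience only to constant deviations and sum over both players. Since $\sum_i(\mathbb{E}_\mu[u_i]-u_i(\vecx))$ equals $\sum_i\operatorname{Cov}_\mu(f_i(a_1),g_i(a_2))$, the total regret of the marginals is at most $2\eta$ plus these two covariances. So only these two correlations need to be killed.

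Second, the same non-invariance trick kills them. Set $u_1'=\bigl(u_1+2g_2(a_2)+K(g_1(a_2)+2)\bigr)/(4K)$, and define $u_2'$ symmetrically with $Kf_2(a_1)$. This preserves the CE set (deviation gains scale by $1/(4K)$) and rank two. It also makes player~1's conditional obedience payoff equal to $\tfrac12+\tfrac14\mathbb{E}_\mu[g_1(a_2)\mid a_1]$ up to $O(1/K)$. Averaged constant expectation then gives $\sum_{a_1}p_1^\mu(a_1)\,|\mathbb{E}_\mu[g_1\mid a_1]-\mathbb{E}_\mu[g_1]|\le 8\epsilon+6/K$, hence $|\operatorname{Cov}_\mu(f_1,g_1)|\le 8\epsilon+6/K$, and symmetrically for player~2. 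Take $K=\Theta(1/\eta)$, $\epsilon=\Theta(\eta)$, $\delta=\Theta(\eta^2)$ with $\eta$ exponentially small. You obtain an $\eta$-Nash equilibrium, which rounds to an exact one.

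This also explains why the source must be rank two: each player's single scalar constant-expectation condition neutralizes exactly one covariance term. Exact-equilibrium hardness for rank two is due to Chen and Yannakakis. Mehta's result is for rank three, which this mechanism does not directly handle.
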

Together with our positive result, this fully characterizes the complexity of computing constant-expectation correlated equilibria, which is a natural solution concept in its own right. \Cref{theorem:main-hardness} marks a rare problem that admits an FPTAS and yet is \PPAD-hard at exponentially small approximation error. Another notable example is \emph{constant-rank} bimatrix games~\citep{Kannan10:Games,Mehta14:Constant}. Our reduction in fact starts from a rank-2 bimatrix game, establishing an unexpected connection between constant-expectation CEs and constant-rank bimatrix games. We defer the details of our hardness results to~\Cref{sec:PPAD} and additional related work to~\Cref{sec:related}.
\section{Further related work}
\label{sec:related}

\paragraph{Markov games.} 

There has been much work examining correlated equilibrium concepts in Markov games. \citet{Song22:When} and~\citet{Jin24:V} developed decentralized algorithms for learning (coarse) correlated equilibria with sample complexity that avoids an exponential dependence on the number of players. \citet{Wang23:Breaking} and ~\citet{Cui23:Breaking} extended such guarantees to settings with function approximation, while~\citet{Cai24:Near} and~\citet{Mao24:Widetilde} obtained near-optimal convergence rates in the full-feedback setting. These results concern equilibrium notions weaker than NFCE and do not guarantee obedience when each player observes its entire recommended policy before play. Our results allow arbitrary history-dependent deviations conditioned on the entire policy recommendation. On the flip side, our algorithm is centralized and model based, and is polynomial only for a fixed number of players.

A complementary line of work has established computational barriers. \citet{Daskalakis23:Complexity} showed hardness for stationary Markov coarse correlated equilibria in discounted stochastic games, while also giving efficient algorithms for nonstationary Markov coarse correlated equilibria. \citet{Farina26:Complexity} showed that hardness persists even in two-player Markov games with a single controller. These hardness results concern stationary equilibria in the discounted setting, whereas our algorithm leverages the finite horizon through backward induction. Separately, \citet{Foster23:Hardness} established hardness for computing coarse correlated equilibria represented as sparse mixtures of product policies, which rules out computationally efficient independent no-regret learning. More recently, \citet{Yorulmaz26:Independent} obtained computational lower bounds for independent learning in discounted games.

Beyond fully observable Markov games, \citet{Liu26:Partially} examined partially observable games with information sharing, obtaining quasi-polynomial guarantees under suitable information-sharing and observability assumptions. Our approach relies on full observability to solve the continuation problems separately.

\paragraph{Beyond Markov games.} As discussed earlier, existing algorithmic results focus on relaxations of the normal-form correlated equilibrium. Such relaxations can be understood through the framework of \emph{$\Phi$-equilibria}~\citep{Greenwald03:General,Stoltz07:Learning,Gordon08:No}, which has engendered much recent work~\citep{Morrill21:Efficient,Morrill21:Hindsight,Arunachaleswaran25:Swap,Daskalakis25:Efficient,Zhang25:Learning}. \citet{Anagnostides26:Complexity} showed that computing a normal-form correlated equilibrium in concave (quadratic) games is as hard as finding fixed points of contractions, which is believed to be a hard problem. The complexity, for example, in extensive-form games remains an open question. In particular, it is open whether there is an FPTAS for NFCE even in two-player (imperfect-information) extensive-form games. It is worth noting again that our approach relies crucially on full observability and does not directly extend to extensive-form games. The basic challenge in extensive-form games is that continuation problems cannot be solved independently, so there are additional obstructions in applying backward induction.

\paragraph{Constant-rank bimatrix games.} An unexpected connection we uncover in our paper is between constant-expectation CEs and Nash equilibria in constant-rank bimatrix games. This class of games was introduced by~\citet{Kannan10:Games}, motivated by applications in economics. Unlike general bimatrix games, where there is no FPTAS unless $\P = \PPAD$~\citep{Chen09:Settling}, \citet{Kannan10:Games} showed that constant-rank bimatrix games admit an FPTAS (\emph{cf.}~\citealp{Alon13:Approximate}). \citet{Mehta14:Constant} later established that, even in rank-$3$ bimatrix games, finding a Nash equilibrium is \PPAD-complete when the precision is exponentially small. In contrast, rank-$1$ games admit a polynomial-time algorithm~\citep{Adsul11:Rank} and are amenable to learning dynamics~\citep{Patris24:Learning}.

\paragraph{Other refinements of correlated equilibria.} A related refinement is \emph{information-value-free correlated equilibrium (IVFCE)}, introduced recently by~\citet{Hartline26:Computable}, which requires that each player's (unconditional) expected payoff can be matched by a fixed action chosen without observing the recommendation. In contrast, constant-expectation CE requires each player's conditional expected payoff from obedience to be independent of its recommendation. Neither property implies the other, although both hold at any Nash equilibrium. The two notions also exhibit similarities in their computational complexity. In succinct many-player games, both are \PPAD-hard to compute even at constant precision~\citep{Anagnostides26:Algorithmic}. In explicit normal-form games, however, an exact IVFCE can be computed in polynomial time~\citep{Hartline26:Computable}, whereas we show that computing a constant-expectation CE is \PPAD-hard at exponentially small precision even with two players, despite admitting an FPTAS for a fixed number of players.

\section{Preliminaries}

We first describe the basic models of normal-form and Markov games.

\subsection{Normal-form games}

A finite \emph{normal-form game} is specified by $\mathcal{G}=([n],(\Acal_i)_{i=1}^n,(u_i)_{i=1}^n)$. We make the usual normalization assumption $u_i: \Acal\to[0,1]$, where $\Acal\defeq\prod_i \Acal_i$ is the joint action space. For a (potentially correlated) distribution $\mu\in\DeltaSet(\Acal)$, we use the shorthand notation
\begin{align*}
    p_i^\mu(a_i)
    \defeq\sum_{\veca_{-i}}\mu(a_i,\veca_{-i}) \text{ and }
    w_i^\mu(a_i)
    \defeq\sum_{\veca_{-i}}\mu(a_i,\veca_{-i})u_i(a_i,\veca_{-i}).
\end{align*}
Our main notion of equilibrium is the {\em correlated equilibrium}. We will be interested in two different notions of {\em approximate} correlated equilibrium, which we now define.
\begin{definition}[Correlated equilibrium; \citealp{Aumann74:Subjectivity}]\label[definition]{def:ce}
A distribution $\mu\in\DeltaSet(\Acal)$ is a \emph{$\delta$-correlated equilibrium (CE)} if, for every player $i \in [n]$ and deviation function $\phi_i : \Acal_i \to \Acal_i$,
\[
    \E_{\veca\sim\mu}
    \bigl[u_i(\phi_i(a_i),\veca_{-i})-u_i(\veca)\bigr]
    \le\delta.
\]
It is a $\delta$-{\em conditional correlated equilibrium} if, 
for every $i$ and every pair $a_i, a_i' \in \cA_i$ with $p_i^\mu(a_i) > 0$,
\begin{equation}\label{eq:nfce}
    \E_{\va \sim \mu}\bigl[u_i(a_i', \va_{-i}) - u_i(a_i, \va_{-i}) \mid a_i] \le \delta.
\end{equation}
The latter notion of approximation is stronger: every $\delta$-conditional CE is also a $\delta$-CE, but not vice-versa. Our upper bounds will use conditional CE, and our lower bounds will use CE.
\end{definition}

% Whenever $p_i^\mu(a_i)>0$, the \emph{conditional obedience payoff} is $\mathbb{E}_{\veca\sim\mu}[u_i(\veca)\mid a_i]=w_i^\mu(a_i)/p_i^\mu(a_i)$.

\subsection{Markov games}

An $n$-player {\em Markov game} is played over stages $t=1,\ldots,H$. At stage $t$, the state belongs to a finite set $\Scal_t$. At state $s \in \Scal_t$, player $i$ has a finite action set $\Acal_{i,t,s}$. The joint action set is denoted by $\Acal_{t,s} \defeq\prod_{i=1}^n \Acal_{i,t,s}$. We will write $\veca_t\in \Acal_{t,s_t}$.  Player $i$ receives reward $r_{i,t}(s_t,\veca_t)$, and the next state is drawn from $\mathbb{P}_t(\cdot\mid s_t,\veca_t)\in\DeltaSet(\Scal_{t+1})$. (This randomization is independent conditioned on the current state and joint action.) The initial state is drawn from a distribution $\lambda\in\DeltaSet(\Scal_1)$.

A \emph{history} before stage $t$ is $h_t=(s_1,\veca_1,s_2,\veca_2,\ldots,s_{t-1},\veca_{t-1},s_t)$.  A \emph{pure (history-dependent) policy} of player $i$ maps every feasible history $h_t$ to an action in $\Acal_{i,t,s_t}$. We will use $\Pi_i$ to denote the set of pure policies of player $i$, and $\Pi := \prod_{i=1}^n \Pi_i$ to denote the set of pure policy profiles. A policy is {\em Markov} if it ignores the history before $(t, s)$. In particular, a Markov policy is given by a vector $\vecpi_i = \bigl(\pi_{i,t}(s)\bigr)_{t\in[H],\,s \in\Scal_t}$, where $\pi_{i,t}(s)\in \Acal_{i,t,s}$. For a {\em joint policy} $\vecpi = (\pi_1, \dots, \pi_n)$, we let $u_i(\vecpi)\defeq\mathbb{E}[\sum_{t=1}^H r_{i,t}(s_t,\veca_t)]$ be the expected cumulative reward. For convenience, we make the following normalization assumption.

\begin{assumption}[Normalization]\label[assumption]{ass:normalization}
For every $i,t,s,\veca$, we have $0\le r_{i,t}(s,\veca)\le 1/H$.
\end{assumption}
In particular, every cumulative payoff lies in $[0, 1]$. This normalization is without loss after rescaling.

\paragraph{Normal-form correlation.}
A finite-horizon Markov game induces a finite normal-form game, in which every (possibly history-dependent) policy $\pi_i$ is an action. Our main solution concept for Markov games will be the {\em normal-form correlated equilibrium} (NFCE), which is simply \Cref{def:ce} applied to this induced normal-form game.  As with normal-form games, we will use both the standard and conditional notions of approximate NFCE; our upper bounds will apply to the latter and lower bounds to the former.
% The normal form of the Markov game treats each pure policy as one action. In an NFCE, the mediator samples a complete policy before play and privately reveals each player's policy to that player. This is the correlated equilibrium of the induced normal-form game.
This equilibrium requirement is significantly stronger than those guaranteed by existing algorithms, which revolve around notions akin to \emph{extensive-form correlated equilibrium}, where recommendations are revealed only when the corresponding information sets are reached \citep{vonStengel08:Extensive}.

% We will use two different notions of {\em approximate} normal-form correlated equilibrium, which we now define.

% \begin{definition}[NFCE]\label[definition]{def:nfce}
% Let $\rho$ be a distribution over pure policy profiles.  The mediator draws $\vecpi\sim\rho$ before play and privately reveals $\vecpi_i$ to player $i$. The distribution $\rho$ is an \emph{$\epsilon$-conditional normal-form correlated equilibrium ($\epsilon$-conditional NFCE)}\footnote{At $\epsilon=0$, this is the usual normal-form correlated-equilibrium condition, written conditional on each recommendation with positive probability. For $\epsilon>0$, it is stronger than the usual (\emph{ex-ante}) condition that only bounds the expected benefit of a deviation map.} if, for every player $i$, every recommended policy $\vecpi_i$ with $\rho_i(\vecpi_i)>0$, and every possibly randomized, history-dependent deviation policy $\vecpi'_i = \vecpi_i'(\vecpi_i)$,
% \begin{equation}\label{eq:nfce}
%     \mathbb{E}\bigl[u_i(\vecpi_i,\vecpi_{-i})\mid \vecpi_i\bigr]
%     \ge
%     \mathbb{E}\bigl[u_i(\vecpi'_i,\vecpi_{-i})\mid \vecpi_i\bigr]
%     -\epsilon.
% \end{equation}
% \end{definition}
% Our hardness results hold even under the weaker \emph{ex-ante} notion, which we call an $\epsilon$-NFCE.
\section{Computing constant-expectation correlated equilibrium}
\label{sec:cece}

We first focus on approximately computing \emph{constant-expectation correlated equilibria}---a refinement of the usual notion---in normal-form games. This refinement of correlated equilibrium will be key to our results about Markov games.

\begin{definition}[Constant-expectation CE]\label[definition]{def:constant-expectation}
A distribution $\mu$ is an \emph{$(\epsilon,\delta)$-constant-expectation correlated equilibrium} if it is a $\delta$-CE and, for every player $i \in [n]$, there exists $v_i\in\mathbb{R}$ such that
\[
    \sum_{a_i\in\supp(p_i^\mu)}p_i^\mu(a_i)
    \left|\mathbb{E}_{\veca\sim\mu}
    [u_i(\veca)\mid a_i]-v_i\right|
    \le\epsilon.
\]
\end{definition}
Equivalently, $\sum_{a_i\in\Acal_i}|w_i^\mu(a_i)-v_i p_i^\mu(a_i)|\le\epsilon$. Our algorithm will in fact satisfy exact obedience and the stronger condition $\ell_i \le \mathbb{E}_{\veca\sim\mu} [u_i(\veca)\mid a_i] \le \ell_i+\epsilon$ for every supported $a_i$, whereas our hardness results hold for the more permissive averaged condition.

To clarify this concept, we next provide an illustrative example.

\begin{example}
    We consider the following bimatrix game \(\mathcal{G}\), where \(\mathcal A_1 = \{\mathsf{U},\mathsf{D}\}\) and
\(\mathcal A_2=\{\mathsf{L},\mathsf{R}\}\):
\[
\renewcommand{\arraystretch}{1.25}
\setlength{\arraycolsep}{12pt}
\begin{array}{c|cc}
     & \mathsf{L}     & \mathsf{R} \\ \hline
\mathsf{U}    & \cellcolor{payoffgray}(1,1) & (0,0) \\
\mathsf{D}    & (0,0) & \cellcolor{payoffgray}(1,1)
\end{array}
\]
Let $\mu$ be the uniform distribution over the pure profiles $(\mathsf{U}, \mathsf{L})$ and $( \mathsf{D}, \mathsf{R})$. This is not a Nash equilibrium, because $\mu$ is not a product distribution. It is, however, a constant-expectation CE. On the other hand, the same distribution is not a constant-expectation CE in the following bimatrix game.
\[
\renewcommand{\arraystretch}{1.25}
\setlength{\arraycolsep}{12pt}
\begin{array}{c|cc}
     & \mathsf{L}     & \mathsf{R} \\ \hline
\mathsf{U}    & \cellcolor{payoffgray}(1,1) & (0,0) \\
\mathsf{D}    & (0,0) & \cellcolor{payoffgray}(\nicefrac{1}{2},\nicefrac{1}{2})
\end{array}
\]
Nevertheless, $\mu$ is a correlated equilibrium in both games.
\end{example}

\paragraph{Linear programming-based algorithm.} The basic idea is to exhaustively search over all possible values in~\Cref{def:constant-expectation}, up to some small granularity. In particular, for a rational $\epsilon>0$, we let $G_\epsilon \defeq\{k\epsilon:k=0,1,\ldots,\lfloor 1/\epsilon\rfloor\}$ be a discretization of the $[0, 1]$ interval. For a fixed $\vecell=(\ell_i)_{i=1}^n\in G_\epsilon^n$, we consider the following linear feasibility program in $\mu$:

\begin{equation}\label{eq:grid-lp}
\begin{aligned}
    \mu
    &\in \operatorname{CE}(\mathcal{G}),\\
    \ell_i p_i^\mu(a_i)
    &\le w_i^\mu(a_i) \le (\ell_i+\epsilon)p_i^\mu(a_i)
    &&\text{for every $i \in [n]$ and $a_i\in \Acal_i$}.
\end{aligned}
\end{equation}
Above, $\operatorname{CE}(\mathcal{G})$ denotes the set of correlated equilibria of $\mathcal{G}$. We use the notation $A \defeq \max_{1 \leq i \leq n} |\mathcal{A}_i|$. 

\begin{restatable}{proposition}{gridlp}
\label[proposition]{prop:grid-lp}
There exists $\vecell \in G_\epsilon^n$ for which the linear program~\eqref{eq:grid-lp} is feasible, and every feasible solution is an $(\epsilon/2,0)$-constant-expectation correlated equilibrium. Moreover, $\ell_i\le\mathbb{E}_{\veca\sim\mu}[u_i(\veca)\mid a_i]\le\ell_i+\epsilon$ for every player $i$ and supported action $a_i$. Therefore, if $L$ is an upper bound on the bit complexity of the payoff tensor of $\mathcal{G}$, the running time is $(A/\epsilon)^{O(n)}\poly(L)$.
\end{restatable}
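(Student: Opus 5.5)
The plan has three parts: show that some grid vector makes the program feasible, show that any feasible point has the claimed properties, and then bound the running time by enumerating the grid.

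\emph{Existence of a feasible $\vecell$.} I would use Nash's theorem: $\mathcal{G}$ has a mixed Nash equilibrium $\vecx$. Let $\mu$ be the product distribution it induces.

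It is standard that $\mu \in \operatorname{CE}(\mathcal{G})$. Moreover, for each player $i$ every action $a_i$ in the support of $x_i$ is a best response. Its conditional obedience payoff is therefore the same number $v_i \defeq u_i(\vecx) \in [0,1]$. In the notation above, this means $w_i^\mu(a_i) = v_i\, p_i^\mu(a_i)$ for every $a_i \in \Acal_i$; the identity holds trivially when $p_i^\mu(a_i)=0$.

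I would then set $\ell_i \defeq \epsilon \lfloor v_i/\epsilon \rfloor$. Since $v_i \le 1$, this lies in $G_\epsilon$, and it satisfies $\ell_i \le v_i < \ell_i + \epsilon$. Multiplying through by $p_i^\mu(a_i) \ge 0$ gives both inequalities in~\eqref{eq:grid-lp}. So this $\vecell$ makes the program feasible. The argument only needs existence, not computation, because the algorithm simply enumerates all of $G_\epsilon^n$.

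\emph{Properties of any feasible solution.} Take any feasible $\mu$.

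First, $\mu$ is an exact CE by the first constraint, so $\delta = 0$.

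Second, for a supported $a_i$, dividing the second constraint by $p_i^\mu(a_i) > 0$ gives
\[
\ell_i \le \E_{\veca\sim\mu}[u_i(\veca)\mid a_i] \le \ell_i+\epsilon .
\]

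Third, choose $v_i \defeq \ell_i + \epsilon/2$. Then each term $\bigl|\E_{\veca\sim\mu}[u_i(\veca)\mid a_i] - v_i\bigr|$ is at most $\epsilon/2$. The weights $p_i^\mu(a_i)$ over the support sum to $1$, so the weighted sum in \Cref{def:constant-expectation} is at most $\epsilon/2$. Hence $\mu$ is an $(\epsilon/2,0)$-constant-expectation CE.

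\emph{Running time.} The algorithm enumerates $\vecell \in G_\epsilon^n$, which has $(\lfloor 1/\epsilon\rfloor+1)^n$ candidates. For each one it solves the linear feasibility program~\eqref{eq:grid-lp} and outputs the first feasible solution found.

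Each program has $|\Acal| \le A^n$ variables. It has $O(nA^2)$ CE constraints, $O(nA)$ constant-expectation constraints, and the simplex constraints. All coefficients are payoffs, $\ell_i$, or $\epsilon$, so their bit complexity is $\poly(L, \log(1/\epsilon))$. A polynomial-time LP algorithm (ellipsoid or interior point) therefore decides feasibility, and finds a feasible point, in time $\poly(A^n, L, \log(1/\epsilon))$.

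Multiplying by the number of candidates gives a total of $(A/\epsilon)^{O(n)}\poly(L)$. The $\log(1/\epsilon)$ factor is absorbed into $(1/\epsilon)^{O(n)}$.

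\emph{Main obstacle.} There is no deep step. The only point needing care is guaranteeing that \emph{some} grid point is feasible. That relies on a Nash equilibrium as an existence witness with exactly constant conditional payoffs, together with the rounding $\ell_i = \epsilon\lfloor v_i/\epsilon\rfloor$ placing $v_i$ in the half-open cell $[\ell_i, \ell_i+\epsilon)$. This covers the edge case $v_i = 1$ even when $1/\epsilon$ is not an integer, because then $\epsilon\lfloor 1/\epsilon\rfloor \le 1 < \epsilon\lfloor 1/\epsilon\rfloor + \epsilon$.
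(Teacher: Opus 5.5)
Your proposal is correct and follows the paper's proof essentially step for step. You use a Nash equilibrium's product distribution as the feasibility witness with grid rounding $\ell_i=\epsilon\lfloor v_i/\epsilon\rfloor$, divide the constant-expectation constraints by $p_i^\mu(a_i)$ and take $v_i=\ell_i+\epsilon/2$ for the $(\epsilon/2,0)$ guarantee, and bound the running time by enumerating the $(1/\epsilon)^{O(n)}$ grid points with one polynomial-time LP solve each.
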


The running time of this algorithm scales exponentially with the number of players. In~\Cref{sec:PPAD}, we will show that this obstacle is fundamental.

\section{Efficient backward induction for NFCE in Markov games}
\label{sec:Markov}

Let $\Scal\defeq\{(t,s):t\in[H],\ s\in\Scal_t\}$ and $S\defeq|\Scal| = \sum_t|\Scal_t|$. We initialize $\ell_{i,H+1}(s)\defeq0$ for every player $i$ and terminal state $s\in\Scal_{H+1}$. We then process stages in reverse order. At a state $s\in\Scal_t$, we define an auxiliary normal-form game with payoff
\begin{equation}\label{eq:aux-payoff}
    g_{i,t,s}(\veca)
    \defeq r_{i,t}(s,\veca)
      +\sum_{s'\in\Scal_{t+1}}\mathbb{P}_t(s'\mid s,\veca)\ell_{i,t+1}(s').
\end{equation}
Using~\Cref{prop:grid-lp}, we can find a correlated equilibrium
$\mu_{t,s}\in\DeltaSet(\Acal_{t,s})$ and values $\ell_{i,t}(s)$ satisfying
\begin{equation}\label{eq:state-constant-expectation}
    \ell_{i,t}(s)
    \le
    \mathbb{E}_{\veca\sim\mu_{t,s}}
       \bigl[g_{i,t,s}(\veca)\mid a_i\bigr]
    \le
    \ell_{i,t}(s)+\epsilon
\end{equation}
for every $i$ and every $a_i\in \Acal_{i,t,s}$ with positive marginal probability under $\mu_{t,s}$. After all states have been processed, we independently sample
\begin{equation}\label{eq:local-sampling}
    \veca_{t,s}=(a_{1,t,s},\ldots,a_{n,t,s})\sim\mu_{t,s}
\end{equation}
for every $(t,s)\in\Scal$. The sampled pure Markov policy of player $i$ is
\begin{equation}\label{eq:sampled-policy}
    \pi_{i,t}(s) \defeq a_{i,t,s}.
\end{equation}
We define $\rho$ to be the resulting distribution over pure Markov policy profiles, succinctly represented through $(\mu_{t,s})_{(t,s)\in\Scal}$.

\begin{restatable}{lemma}{normaliz}
\label[lemma]{lem:well-defined}
For every $t,s,i$, $0\le \ell_{i,t}(s)\le (H-t+1)/H$, and the auxiliary payoffs at stage $t$ lie in $[0,(H-t+1)/H]$.
\end{restatable}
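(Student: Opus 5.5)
The proof is a backward induction on the stage $t$, running from $t=H+1$ down to $t=1$. We prove both claims together because they depend on each other. The bound on the values $\ell_{i,t+1}$ controls the range of the auxiliary payoffs at stage $t$. That range in turn makes \Cref{prop:grid-lp} applicable at stage $t$, and bounds the new values $\ell_{i,t}$.

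\emph{Base case.} For $t=H+1$ we have $\ell_{i,H+1}(s)=0$ by definition, and $(H-(H+1)+1)/H=0$. So the claim holds trivially; there are no auxiliary payoffs at stage $H+1$.

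\emph{Inductive step.} Suppose $0\le \ell_{i,t+1}(s')\le (H-t)/H$ for every $i$ and every $s'\in\Scal_{t+1}$. Fix $s\in\Scal_t$ and a joint action $\veca$, and look at \eqref{eq:aux-payoff}.
\begin{itemize}
    \item By \Cref{ass:normalization}, the reward satisfies $r_{i,t}(s,\veca)\in[0,1/H]$.
    \item The transition weights $\mathbb{P}_t(\cdot\mid s,\veca)$ form a probability distribution. Hence the continuation term is a convex combination of numbers in $[0,(H-t)/H]$, and so lies in that interval.
\end{itemize}
Adding the two parts gives $g_{i,t,s}(\veca)\in[0,(H-t+1)/H]\subseteq[0,1]$. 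This proves the second assertion at stage $t$.

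It also shows that the auxiliary game at $(t,s)$ satisfies the normalization $u_i:\Acal\to[0,1]$ assumed in the normal-form preliminaries. Therefore \Cref{prop:grid-lp} applies, and it returns some $\vecell\in G_\epsilon^n$. Since $G_\epsilon\subseteq[0,1]$, this gives $\ell_{i,t}(s)\ge 0$.

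For the upper bound, note that $\mu_{t,s}$ is a probability distribution, so some action $a_i$ has positive marginal probability. For that $a_i$, \eqref{eq:state-constant-expectation} gives
\[
\ell_{i,t}(s)\le \mathbb{E}_{\veca\sim\mu_{t,s}}\bigl[g_{i,t,s}(\veca)\mid a_i\bigr]\le \max_{\veca} g_{i,t,s}(\veca)\le (H-t+1)/H .
\]
This completes the induction.

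The only delicate point is the order of the argument. The range bound on $g_{i,t,s}$ must be established \emph{before} invoking \Cref{prop:grid-lp}. Otherwise the proposition's hypotheses would not hold, and its feasibility guarantee that some grid point works could not be used. Once the order is set, every step is a direct convexity estimate.
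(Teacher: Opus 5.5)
Your proof is correct and follows essentially the same route as the paper: backward induction from $t=H+1$, bounding $g_{i,t,s}$ by $1/H+(H-t)/H$ via \Cref{ass:normalization} and the inductive bound on continuation values, then placing $\ell_{i,t}(s)$ between $0$ (by grid membership) and the conditional expected auxiliary payoff of a supported action. Your remark that the range bound on $g_{i,t,s}$ is what makes \Cref{prop:grid-lp} applicable matches the paper's comment immediately after the lemma.
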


This shows that the payoffs lie in $[0,1]$, so we can apply~\Cref{prop:grid-lp}. Now, by virtue of independence, the constructed distribution satisfies the following key property.

\begin{restatable}{lemma}{factoriz}
\label[lemma]{lem:factorization}
Consider a player $i$ and a pure Markov policy $\vecpi_i$ with $\rho_i(\vecpi_i)>0$. Conditional on $\vecpi_i$, the random vectors
$\{\vecpi_{-i,t}(s):(t,s)\in\Scal\}$ are independent. At $(t,s)$, their \emph{conditional law} is
\begin{equation}\label{eq:conditional-local-law}
    \nu_{i,t,s}^{\vecpi_i}(\veca_{-i})
    \defeq\mathbb{P}[\vecpi_{-i,t}(s)=\veca_{-i}\mid \vecpi_i]
    =\frac{\mu_{t,s}(\pi_{i,t}(s),\veca_{-i})}
    {\mu_{i,t,s}(\pi_{i,t}(s))},
\end{equation}
where $\mu_{i,t,s}$ is the $i$th marginal of $\mu_{t,s}$.
\end{restatable}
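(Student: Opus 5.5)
The proof is a direct computation from the product structure of $\rho$. By construction~\eqref{eq:local-sampling}, the local profiles $\veca_{t,s}\sim\mu_{t,s}$ are drawn independently across $(t,s)\in\Scal$. Hence, for any pure Markov profile $\vecpi=(\vecpi_i,\vecpi_{-i})$,
\[
\rho(\vecpi)=\prod_{(t,s)\in\Scal}\mu_{t,s}\bigl(\pi_{i,t}(s),\vecpi_{-i,t}(s)\bigr).
\]
The plan is to write this factorization, marginalize to get $\rho_i$, and divide.

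First, summing over $\vecpi_{-i}$, the sum factorizes across states because each factor depends only on its own coordinate $\vecpi_{-i,t}(s)$. This gives
\[
\rho_i(\vecpi_i)=\prod_{(t,s)}\mu_{i,t,s}(\pi_{i,t}(s)).
\]
Consequently, $\rho_i(\vecpi_i)>0$ exactly when every factor $\mu_{i,t,s}(\pi_{i,t}(s))$ is positive, so each denominator in~\eqref{eq:conditional-local-law} is nonzero.

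Second, dividing the joint probability by this marginal yields
\[
\mathbb{P}[\vecpi_{-i}=\cdot\mid\vecpi_i]=\prod_{(t,s)}\frac{\mu_{t,s}(\pi_{i,t}(s),\vecpi_{-i,t}(s))}{\mu_{i,t,s}(\pi_{i,t}(s))}.
\]
Each factor is a probability distribution on the opponents' joint actions at $(t,s)$, since it sums to one over $\veca_{-i}$. A joint law that is a product of probability distributions over the coordinates has those factors as its marginals, and the coordinates are mutually independent. This establishes both the conditional independence and the formula~\eqref{eq:conditional-local-law}.

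There is no real obstacle; the argument only requires care with the following bookkeeping.
\begin{itemize}
\item Opponents' actions at the same state $(t,s)$ may be correlated with one another. For this reason, the coordinates are the joint vectors $\vecpi_{-i,t}(s)$, not individual players' actions.
\item The division must be justified by the positivity of $\rho_i(\vecpi_i)$, as established in the first step.
\end{itemize}
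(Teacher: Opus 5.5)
Your proposal is correct and follows the same route as the paper: factor $\rho$ via independence across states, marginalize to get $\rho_i(\vecpi_i)=\prod_{(t,s)}\mu_{i,t,s}(\pi_{i,t}(s))$, and divide to obtain the product of the local conditional laws $\nu_{i,t,s}^{\vecpi_i}$. Your added remarks — that each factor is positive and that each ratio sums to one, so the product form yields independence with exactly those marginals — spell out steps the paper leaves implicit.
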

This will allow us to argue about the deviation benefit of a player $i$ upon receiving its recommendation. In particular, conditioning on $\vecpi_i$, the other players' policy components are independent across states. Conditioning further on the history retains the law at $(t, s)$ as $\nu_{i,t,s}^{\vecpi_i}$ per~\eqref{eq:conditional-local-law}.

\begin{corollary}
\label[corollary]{cor:beliefs}
For a player $i$, consider any possibly randomized, history-dependent deviation conditioned on $\vecpi_i$. At any history $h_t$ at state $s$ having positive probability under this deviation, the conditional distribution of the opponents' prescribed actions at $(t,s)$ is $\nu_{i,t,s}^{\vecpi_i}$.
\end{corollary}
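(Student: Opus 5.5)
We prove \Cref{cor:beliefs} by a direct Bayesian computation. We condition on $\vecpi_i$ throughout and exploit the independence of \Cref{lem:factorization}. Fix player $i$, a recommendation $\vecpi_i$ with $\rho_i(\vecpi_i)>0$, and a (possibly randomized, history-dependent) deviation $\sigma_i$ that may depend on $\vecpi_i$. All randomness lives on a single probability space, generated by four independent sources conditional on $\vecpi_i$: the opponents' components $\{\vecpi_{-i,t'}(s')\}_{(t',s')\in\Scal}$, the initial state drawn from $\lambda$, the transition randomness, and player $i$'s private randomization. The key observation is that a history $h_t=(s_1,\veca_1,\ldots,s_{t-1},\veca_{t-1},s_t)$ ending at state $s$ is a measurable function of the opponents' components at the pairs $(1,s_1),\ldots,(t-1,s_{t-1})$ visited strictly before stage $t$, together with the other randomness sources. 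Since $\Scal$ indexes pairs $(t',s')$ and each stage is visited exactly once, the pair $(t,s)$ is never among these earlier pairs.

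The plan is to compute $\mathbb{P}[h_t,\ \vecpi_{-i,t}(s)=\veca_{-i}\mid\vecpi_i]$ and show that it factorizes. Write the probability of $h_t$ as a product along the path: $\lambda(s_1)$ times, for each $t'<t$, the deviation's probability of choosing $a_{i,t'}$ given $h_{t'}$, the indicator that $\vecpi_{-i,t'}(s_{t'})=\veca_{-i,t'}$, and the transition probability $\mathbb{P}_{t'}(s_{t'+1}\mid s_{t'},\veca_{t'})$. Taking expectation conditional on $\vecpi_i$, the indicators involve only the components at the pairs $(t',s_{t'})$ with $t'<t$. These pairs are distinct from one another and from $(t,s)$, so by \Cref{lem:factorization} the conditional expectation of the product of these indicators times $\mathbf 1\{\vecpi_{-i,t}(s)=\veca_{-i}\}$ equals
\[
\prod_{t'<t}\nu^{\vecpi_i}_{i,t',s_{t'}}(\veca_{-i,t'})\cdot \nu^{\vecpi_i}_{i,t,s}(\veca_{-i}).
\]
The deviation's randomization and the transition kernel are independent of the opponents' components given the realized path, so they contribute the same multiplicative factors whether or not we additionally condition on $\vecpi_{-i,t}(s)=\veca_{-i}$. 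Hence
\[
\mathbb{P}[h_t,\ \vecpi_{-i,t}(s)=\veca_{-i}\mid\vecpi_i]=\mathbb{P}[h_t\mid\vecpi_i]\,\nu^{\vecpi_i}_{i,t,s}(\veca_{-i}).
\]
Dividing by $\mathbb{P}[h_t\mid\vecpi_i]>0$ gives the claim.

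The main obstacle is a subtlety rather than a computation. The deviation is history-dependent and could in principle be correlated with the opponents' future components through information contained in $h_t$. This does not happen, because $h_t$ reveals opponents' components only at the already visited pairs $(t',s_{t'})$ with $t'<t$. The step needing the most care is writing the path probability so that this measurability structure is explicit: the deviation's choice at stage $t'$ is a kernel depending only on $h_{t'}$ and $\vecpi_i$. With that written down, the independence from \Cref{lem:factorization} does the rest. Randomized deviations are handled by treating player $i$'s private randomness as an additional independent source.
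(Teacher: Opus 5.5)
Your proof is correct and follows the paper's route: the paper obtains this corollary directly from the conditional independence across states in \Cref{lem:factorization}, noting that conditioning on $h_t$, which involves the opponents' components only at the earlier pairs $(t',s_{t'})$ with $t'<t$, leaves the law at $(t,s)$ unchanged. Your explicit path-probability factorization spells out the one-line justification the paper gives just before the corollary.
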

\subsection{The induced Markov decision process}

We now leverage this notion of constant-expectation correlated equilibrium for Markov games.

For a player $i$ and $\vecpi_i\in\supp(\rho_i)$, we will analyze the \emph{conditional MDP}
$\Mcal_i(\vecpi_i)$ with state spaces $\Scal_t$, action sets $\Acal_{i,t,s}$, expected rewards
\begin{equation}
\label{eq:mdp-reward}
    \bar r_{i,t}^{\vecpi_i}(s,a_i')
    \defeq\sum_{\veca_{-i}}
      \nu_{i,t,s}^{\vecpi_i}(\veca_{-i})
      r_{i,t}(s,a_i',\veca_{-i}),
\end{equation}
and transition kernels
\begin{equation}
\label{eq:mdp-transition}
    \overline{\mathbb{P}}_{i,t}^{\vecpi_i}(s'\mid s,a_i')
    \defeq\sum_{\veca_{-i}}
      \nu_{i,t,s}^{\vecpi_i}(\veca_{-i})
      \mathbb{P}_t(s'\mid s,a_i',\veca_{-i}).
\end{equation}

\begin{restatable}[Reduction to an MDP]{lemma}{mdpreduction}
\label[lemma]{lem:mdp-reduction}
Conditional on $\vecpi_i$, the expected payoff of every possibly randomized, history-dependent deviation of player $i$ in the Markov game equals its expected payoff in the MDP $\Mcal_i(\vecpi_i)$.
\end{restatable}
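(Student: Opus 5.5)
The plan is to show that, once $\vecpi_i$ is fixed, the joint law of the trajectory $(s_1,a'_{i,1},\ldots,s_H,a'_{i,H})$ generated by the deviation is the same in the Markov game and in $\Mcal_i(\vecpi_i)$. The payoff claim then follows by showing that the expected stage rewards match. Throughout, we fix $\vecpi_i\in\supp(\rho_i)$ and a deviation. By conditioning on its internal randomness, we may take the deviation to be a pure, history-dependent policy $\sigma$ that depends on $\vecpi_i$. The general randomized case then follows by linearity of expectation.

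The first step handles the history, which in the game contains the opponents' realized actions $\veca_{-i,\tau}$ as well. For that reason, the inductive claim is stated on full histories $h_t=(s_1,\veca_1,\ldots,s_t)$. The claim is that, conditional on $\vecpi_i$, the probability of $h_t$ equals
\[
\lambda(s_1)\prod_{\tau<t}\mathbf{1}[a_{i,\tau}=\sigma(h_\tau)]\,\nu_{i,\tau,s_\tau}^{\vecpi_i}(\veca_{-i,\tau})\,\mathbb{P}_\tau(s_{\tau+1}\mid s_\tau,\veca_\tau).
\]
We prove this by induction on $t$. For the inductive step, \Cref{cor:beliefs} (equivalently \Cref{lem:factorization}) says that, given $\vecpi_i$ and a positive-probability $h_t$ at state $s_t$, the opponents' prescribed actions at $(t,s_t)$ have law $\nu_{i,t,s_t}^{\vecpi_i}$. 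This holds because the history depends only on the components $\vecpi_{-i,\tau}(s_\tau)$ for $\tau<t$, which sit at different state-stage pairs and are therefore independent of $\vecpi_{-i,t}(s_t)$. The opponents play their prescribed Markov actions, player $i$ plays $\sigma(h_t)$, and the next state is drawn from $\mathbb{P}_t$ independently given $(s_t,\veca_t)$. Together these give the product formula.

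The second step considers the game-side reward and conditions on $h_t$:
\[
\E[r_{i,t}(s_t,\veca_t)\mid h_t,\vecpi_i]=\sum_{\veca_{-i}}\nu^{\vecpi_i}_{i,t,s_t}(\veca_{-i})\,r_{i,t}(s_t,\sigma(h_t),\veca_{-i})=\bar r^{\vecpi_i}_{i,t}(s_t,\sigma(h_t)).
\]
Hence the expected game payoff is $\sum_t\E[\bar r_{i,t}^{\vecpi_i}(s_t,\sigma(h_t))]$.

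The obstacle is that $\sigma$ may depend on the opponents' past actions, which do not exist in the MDP. We handle this by viewing the opponents' past actions as exogenous randomness available to player $i$. We construct a randomized history-dependent MDP policy $\tilde\sigma$ that keeps track of $(s_1,a'_1,\ldots,s_t)$. Given this MDP history, $\tilde\sigma$ samples fictitious $\veca_{-i,\tau}$ from the posterior that the product formula induces given $(s_\tau,a'_{i,\tau},s_{\tau+1})_{\tau<t}$, and then plays $\sigma$ on the completed history. Equivalently, we marginalize the product formula over the $\veca_{-i,\tau}$. The factors $\sum_{\veca_{-i}}\nu\,\mathbb{P}_\tau$ then become exactly the transition kernel $\overline{\mathbb{P}}^{\vecpi_i}_{i,\tau}$, so the joint law of $(s_t,a'_{i,t})_t$ under $\sigma$ in the game matches its law under $\tilde\sigma$ in $\Mcal_i(\vecpi_i)$. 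Combined with the second step, the two expected payoffs coincide. The notion of ``expected payoff in the MDP'' for a game deviation is understood through this correspondence. In the special case where $\sigma$ depends only on the state–own-action history, no auxiliary sampling is needed and $\tilde\sigma=\sigma$.
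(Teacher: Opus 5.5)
Your proof is correct and follows the paper's route. \Cref{cor:beliefs} gives that, conditional on $\vecpi_i$ and any positive-probability history at $(t,s)$, the opponents' current actions have law $\nu_{i,t,s}^{\vecpi_i}$, so the conditional expected reward and next-state law are those of $\Mcal_i(\vecpi_i)$, and chaining this over stages (your product formula, the paper's ``iterated expectation'') gives the equality.

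Your posterior-sampling policy $\tilde\sigma$ also makes explicit what the MDP payoff means for a deviation that conditions on opponents' past actions, a point the paper leaves implicit. One detail needs care. The fictitious actions must be drawn from the joint posterior given the whole own trajectory, or equivalently sampled sequentially from the one-step posteriors $\propto \nu_{i,\tau,s_\tau}^{\vecpi_i}(\cdot)\,\mathbb{P}_\tau(s_{\tau+1}\mid s_\tau,a'_{i,\tau},\cdot)$ and retained. The reason is that the indicators $\mathbf{1}[a_{i,\tau}=\sigma(h_\tau)]$ couple the $\veca_{-i,\tau}$ across stages, so your ``marginalize the product formula'' step does not literally factor stage by stage into $\overline{\mathbb{P}}$ terms.
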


This in particular implies that an optimal deviation may be chosen as deterministic and Markovian.

We define the Bellman values of this MDP by $V_{i,H+1}^{*,\vecpi_i}(s) \defeq 0$,
\begin{align*}
    Q_{i,t}^{*,\vecpi_i}(s,a_i')
    \defeq\bar r_{i,t}^{\vecpi_i}(s,a_i')
      +\sum_{s'}\overline{\mathbb{P}}_{i,t}^{\vecpi_i}(s'\mid s,a_i')
        V_{i,t+1}^{*,\vecpi_i}(s'),
\end{align*}
and $V_{i,t}^{*,\vecpi_i}(s) \defeq\max_{a_i'\in \Acal_{i,t,s}}
      Q_{i,t}^{*,\vecpi_i}(s,a_i')$. Now, let
$V_{i,t}^{\vecpi_i}(s)$ denote player $i$'s value in $\Mcal_i(\vecpi_i)$ when following the recommended Markov policy $\vecpi_i$ from stage $t$ onward. We have
\begin{equation}\label{eq:obedient-bellman}
\begin{split}
    V_{i,t}^{\vecpi_i}(s)
    ={}&\sum_{\veca_{-i}}\nu_{i,t,s}^{\vecpi_i}(\veca_{-i})
    \Bigl(
      r_{i,t}(s,\pi_{i,t}(s),\veca_{-i})\\
      &\hspace{35mm}
      +\sum_{s'}\mathbb{P}_t(s'\mid s,\pi_{i,t}(s),\veca_{-i})
        V_{i,t+1}^{\vecpi_i}(s')
    \Bigr).
\end{split}
\end{equation}

Having computed a constant-expectation CE in each auxiliary game, we can derive the following.

\begin{restatable}{lemma}{onestep}
\label[lemma]{lem:one-step}
Let $a_i \defeq\pi_{i,t}(s)$. For every action
$a_i'\in \Acal_{i,t,s}$,
\begin{equation*}
    \label{eq:obedient-surrogate-lower}
    \mathbb{E}_{\veca_{-i}\sim\nu_{i,t,s}^{\vecpi_i}}
      [g_{i,t,s}(a_i',\veca_{-i})]
    \le \ell_{i,t}(s)+\epsilon \text{ and } \mathbb{E}_{\veca_{-i}\sim\nu_{i,t,s}^{\vecpi_i}}
      [g_{i,t,s}(a_i,\veca_{-i})]
    \ge \ell_{i,t}(s).
\end{equation*}
\end{restatable}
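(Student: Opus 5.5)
The plan is to rewrite both expectations as conditional expectations under $\mu_{t,s}$ given player $i$'s recommendation $a_i=\pi_{i,t}(s)$, and then invoke the two properties of $\mu_{t,s}$ guaranteed by \Cref{prop:grid-lp}. The first property is exact obedience, coming from $\mu_{t,s}\in\operatorname{CE}$. The second is the band condition~\eqref{eq:state-constant-expectation}.

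\textbf{Step 1: rewrite the conditional law.} Since $\rho_i(\vecpi_i)>0$ and the local recommendations are sampled independently, the marginal $\mu_{i,t,s}(a_i)$ is positive. Hence $a_i$ is a supported action of the auxiliary game at $(t,s)$. By \Cref{lem:factorization}, $\nu_{i,t,s}^{\vecpi_i}(\veca_{-i})=\mu_{t,s}(a_i,\veca_{-i})/\mu_{i,t,s}(a_i)$, which is exactly the conditional law of $\veca_{-i}$ under $\mu_{t,s}$ given player $i$'s coordinate equals $a_i$. It follows that for any $b\in\Acal_{i,t,s}$,
\[
\mathbb{E}_{\veca_{-i}\sim\nu_{i,t,s}^{\vecpi_i}}[g_{i,t,s}(b,\veca_{-i})]=\mathbb{E}_{\veca\sim\mu_{t,s}}[g_{i,t,s}(b,\veca_{-i})\mid a_i].
\]
This identity is the crux; the rest is bookkeeping.

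\textbf{Step 2: the lower bound.} Take $b=a_i$. The right-hand side becomes $\mathbb{E}_{\mu_{t,s}}[g_{i,t,s}(\veca)\mid a_i]$, which is at least $\ell_{i,t}(s)$ by~\eqref{eq:state-constant-expectation}.

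\textbf{Step 3: the upper bound for arbitrary $a_i'$.} Since $\mu_{t,s}$ is an exact CE of the auxiliary game, the conditional obedience inequality holds for every supported $a_i$:
\[
\mathbb{E}_{\mu_{t,s}}[g_{i,t,s}(a_i',\veca_{-i})\mid a_i]\le \mathbb{E}_{\mu_{t,s}}[g_{i,t,s}(\veca)\mid a_i].
\]
The right-hand side is at most $\ell_{i,t}(s)+\epsilon$ by~\eqref{eq:state-constant-expectation}.

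One point needs justification: that an exact ($0$-)CE is also a conditional CE for each supported action. I will derive this from \Cref{def:ce} by taking the deviation $\phi_i$ that maps $a_i\mapsto a_i'$ and fixes every other action. The resulting ex-ante inequality equals $\mu_{i,t,s}(a_i)$ times the conditional gap. Since $\mu_{i,t,s}(a_i)>0$, we can divide by it to obtain the conditional inequality.

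There is no real obstacle here. The only care needed is to check that \Cref{prop:grid-lp} applies to the auxiliary game, which requires its payoffs to lie in $[0,1]$. This is guaranteed by \Cref{lem:well-defined}.
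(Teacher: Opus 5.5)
Your proposal is correct and follows the paper's proof. Both arguments use positivity of $\mu_{i,t,s}(a_i)$ (from $\vecpi_i\in\supp(\rho_i)$), exact conditional obedience of the CE $\mu_{t,s}$ under the conditional law $\nu_{i,t,s}^{\vecpi_i}$, and the band condition~\eqref{eq:state-constant-expectation} for both bounds. The paper leaves implicit two steps that you spell out: the identification of $\nu_{i,t,s}^{\vecpi_i}$ with the conditional law, and the single-swap argument that turns ex-ante CE obedience into conditional obedience.
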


\begin{restatable}{lemma}{valuebounds}
\label[lemma]{lem:value-bounds}
For every $i$, every $\vecpi_i\in\supp(\rho_i)$, every stage $t$, and every state $s\in\Scal_t$,
\begin{equation*}
    V_{i,t}^{\vecpi_i}(s) \geq \ell_{i,t}(s) \text{ and } V_{i,t}^{*,\vecpi_i}(s)
    \le \ell_{i,t}(s)+(H-t+1) \epsilon.
\end{equation*}
\end{restatable}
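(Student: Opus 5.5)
We argue by backward induction on $t$, from $t=H+1$ down to $t=1$, proving both inequalities together for every state $s\in\Scal_t$. The base case $t=H+1$ is immediate: $V_{i,H+1}^{\vecpi_i}(s)=V_{i,H+1}^{*,\vecpi_i}(s)=0=\ell_{i,H+1}(s)$, so both bounds hold with zero slack.

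For the inductive step, fix $t\le H$ and $s\in\Scal_t$, and assume both bounds hold at stage $t+1$ for every $s'\in\Scal_{t+1}$. The basic tool is to rewrite the surrogate expectation from \Cref{lem:one-step} using \eqref{eq:aux-payoff}. For any $a_i'\in\Acal_{i,t,s}$,
\[
\mathbb{E}_{\veca_{-i}\sim\nu_{i,t,s}^{\vecpi_i}}[g_{i,t,s}(a_i',\veca_{-i})]
=\bar r_{i,t}^{\vecpi_i}(s,a_i')+\sum_{s'}\overline{\mathbb{P}}_{i,t}^{\vecpi_i}(s'\mid s,a_i')\,\ell_{i,t+1}(s').
\]
This follows by linearity from \eqref{eq:mdp-reward} and \eqref{eq:mdp-transition}. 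So the expected surrogate payoff equals a Bellman backup in the conditional MDP $\Mcal_i(\vecpi_i)$ in which the continuation values are replaced by the surrogate values $\ell_{i,t+1}$.

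\textbf{Lower bound.} Take $a_i=\pi_{i,t}(s)$. The recursion \eqref{eq:obedient-bellman} is exactly the same backup, but with $V_{i,t+1}^{\vecpi_i}$ in place of $\ell_{i,t+1}$. The inductive hypothesis gives $V_{i,t+1}^{\vecpi_i}\ge\ell_{i,t+1}$ pointwise, and the transition weights are nonnegative. Hence $V_{i,t}^{\vecpi_i}(s)$ is at least the surrogate expectation at $a_i$, which is at least $\ell_{i,t}(s)$ by the second inequality of \Cref{lem:one-step}.

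\textbf{Upper bound.} For each $a_i'$ we have
\[
Q_{i,t}^{*,\vecpi_i}(s,a_i')=\bar r_{i,t}^{\vecpi_i}(s,a_i')+\sum_{s'}\overline{\mathbb{P}}_{i,t}^{\vecpi_i}(s'\mid s,a_i')\,V_{i,t+1}^{*,\vecpi_i}(s').
\]
The inductive hypothesis gives $V_{i,t+1}^{*,\vecpi_i}(s')\le\ell_{i,t+1}(s')+(H-t)\epsilon$. Since $\overline{\mathbb{P}}_{i,t}^{\vecpi_i}(\cdot\mid s,a_i')$ is a probability distribution, the added constant passes through unchanged. So $Q_{i,t}^{*,\vecpi_i}(s,a_i')$ is at most the surrogate expectation at $a_i'$ plus $(H-t)\epsilon$. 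By the first inequality of \Cref{lem:one-step}, the surrogate expectation is at most $\ell_{i,t}(s)+\epsilon$, which gives $Q_{i,t}^{*,\vecpi_i}(s,a_i')\le\ell_{i,t}(s)+(H-t+1)\epsilon$. Maximizing over $a_i'$ yields the claimed bound on $V_{i,t}^{*,\vecpi_i}(s)$.

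\textbf{Main obstacle.} The only delicate point is that the induction is stated for every state $s\in\Scal_t$, including states that are unreachable under $\vecpi_i$. This is harmless because the conditional law $\nu_{i,t,s}^{\vecpi_i}$ is defined at every $(t,s)$ by \Cref{lem:factorization}: $\vecpi_i\in\supp(\rho_i)$ means every component $\pi_{i,t}(s)$ has positive marginal under $\mu_{t,s}$. Therefore \Cref{lem:one-step} applies at every state, and the argument reduces to the linearity identity above together with monotonicity of the Bellman backup.
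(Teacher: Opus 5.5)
Your proof is correct and follows the same route as the paper. It is a joint backward induction: the lower bound comes from the obedient recursion~\eqref{eq:obedient-bellman}, the hypothesis $V_{i,t+1}^{\vecpi_i}\ge\ell_{i,t+1}$, and the second inequality of \Cref{lem:one-step}; the upper bound comes from the optimal Bellman backup, the hypothesis $V_{i,t+1}^{*,\vecpi_i}\le\ell_{i,t+1}+(H-t)\epsilon$, and the first inequality of \Cref{lem:one-step}. Your remark that $\nu_{i,t,s}^{\vecpi_i}$ is well defined at every state, including unreachable ones, is a minor clarification the paper leaves implicit: $\vecpi_i\in\supp(\rho_i)$ forces every component $\pi_{i,t}(s)$ to have positive marginal under $\mu_{t,s}$.
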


\subsection{Main result}

\begin{algorithm}[t]
\caption{Backward induction for NFCE in Markov games}
\label{alg:nfce}
\DontPrintSemicolon
\SetKw{Break}{break}
\SetKw{Return}{return}
\KwIn{Finite-horizon Markov game satisfying
    \Cref{ass:normalization}; accuracy $\epsilon\in(0,1]$.}
\KwOut{An $\epsilon$-conditional NFCE represented by
    $(\mu_{t,s})_{(t,s)\in\Scal}$, sampled independently.}

$\eta\gets\epsilon/H$\;
$\ell_{i,H+1}(s)\gets0$ for every $i\in[n]$
    and $s\in\Scal_{H+1}$\;
\For{$t=H,H-1,\ldots,1$}{
    \ForEach{$s\in\Scal_t$}{
        Define, for every $i\in[n]$ and $\veca\in\Acal_{t,s}$,
        \[
            g_{i,t,s}(\veca)
            \gets r_{i,t}(s,\veca)
            +\sum_{s'\in\Scal_{t+1}}
            \mathbb{P}_t(s'\mid s,\veca)\ell_{i,t+1}(s').
        \]
        \ForEach{$\vecell\in G_\eta^n$}{
            Solve~\eqref{eq:grid-lp} with payoffs
                $(g_{i,t,s})_{i=1}^n$, vector $\vecell$,
                and width $\eta$\;
            \If{feasible, with solution $\mu$}{
                $\mu_{t,s}\gets\mu$ and
                    $\ell_{i,t}(s)\gets\ell_i$
                    for every $i\in[n]$\;
                \Break\;
            }
        }
    }
}
\Return{$(\mu_{t,s})_{(t,s)\in\Scal}$, inducing $\rho$
    via~\eqref{eq:local-sampling}--\eqref{eq:sampled-policy}}\;
\end{algorithm}

We are now ready to establish our main result.

\begin{theorem}[Backward-induction conditional NFCE guarantee]\label{thm:main}
The distribution $\rho$ over pure Markov policies produced by~\Cref{alg:nfce} is an $\epsilon$-conditional NFCE. Moreover, assuming the stage games and transitions are given explicitly with bit complexity at most $L$, and $A\defeq\max_{i,t,s}|\Acal_{i,t,s}|$, it runs in time
\begin{equation*}
    S\left(\frac{AH}{\epsilon}\right)^{O(n)}\poly(L).
\end{equation*}
\end{theorem}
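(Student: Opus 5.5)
The plan is to combine the preceding lemmas, with the accuracy parameter in each stage game set to $\eta=\epsilon/H$ as in \Cref{alg:nfce}. Fix a player $i$ and a recommended policy $\vecpi_i$ with $\rho_i(\vecpi_i)>0$, and consider any (possibly randomized, history-dependent) deviation $\vecpi_i'$ chosen as a function of $\vecpi_i$.

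\textbf{Reduction to the conditional MDP.} By \Cref{lem:mdp-reduction}, the conditional expected payoff of the deviation, $\mathbb{E}[u_i(\vecpi_i',\vecpi_{-i})\mid\vecpi_i]$, equals its expected payoff in the MDP $\Mcal_i(\vecpi_i)$. That payoff is at most the optimal value $\sum_{s}\lambda(s)V_{i,1}^{*,\vecpi_i}(s)$. Likewise, the obedient payoff $\mathbb{E}[u_i(\vecpi_i,\vecpi_{-i})\mid\vecpi_i]$ equals $\sum_s\lambda(s)V_{i,1}^{\vecpi_i}(s)$, obtained from the recursion \eqref{eq:obedient-bellman}; this is the special case of \Cref{lem:mdp-reduction} where the deviation is the recommendation itself.

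\textbf{Bounding the gap.} Applying \Cref{lem:value-bounds} with width $\eta$ at $t=1$ gives, for every $s\in\Scal_1$,
\[
V_{i,1}^{*,\vecpi_i}(s)-V_{i,1}^{\vecpi_i}(s)\le H\eta=\epsilon .
\]
Averaging over $s\sim\lambda$ bounds the conditional deviation gain by $\epsilon$ for every supported $\vecpi_i$. This is exactly the conditional condition \eqref{eq:nfce} in the induced normal-form game.

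One point needs care. Lemmas~\ref{lem:one-step}--\ref{lem:value-bounds} are stated with $\epsilon$ as the grid width, and \Cref{alg:nfce} instantiates that width as $\eta$, so the bound $(H-t+1)\eta$ is what gets used. The real content sits in \Cref{lem:value-bounds}, which holds by backward induction using \Cref{lem:one-step} together with the fact, from \Cref{cor:beliefs}, that the conditional opponent law at each state does not depend on history. Since those lemmas are already available, the only genuine obstacle is making sure the deviation class is fully covered. Deviations may be randomized and may condition on the entire $\vecpi_i$; both are handled because \Cref{lem:mdp-reduction} covers all such deviations, and randomization cannot beat the MDP optimum.

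\textbf{Running time.} \Cref{lem:well-defined} ensures the auxiliary payoffs lie in $[0,1]$, so \Cref{prop:grid-lp} applies at each state and guarantees that some $\vecell\in G_\eta^n$ is feasible. The inner loop therefore always succeeds, and the solution satisfies \eqref{eq:state-constant-expectation} with width $\eta$. Per state, the algorithm:
\begin{itemize}
\item builds $g_{i,t,s}$ in time $\poly(A^n,S,L)$;
\item enumerates at most $(1/\eta+1)^n=(H/\epsilon+1)^n$ vectors $\vecell$;
\item solves, for each $\vecell$, a linear program with $A^n$ variables and $O(nA^2)$ constraints, in time $\poly(A^n,L')$.
\end{itemize}

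Here $L'$ is the bit complexity of $g_{i,t,s}$. I expect $L'$ could grow across stages through the $\ell$ values, so I would bound it separately. Each $\ell_{i,t+1}(s')$ is a grid point $k\eta$ with $\eta$ rational, whose bit length is $O(\log(H/\epsilon))$. Hence $g_{i,t,s}$ has bit complexity $\poly(L,\log(H/\epsilon))$ and does not accumulate across stages.

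Multiplying by the $S$ states gives a total running time of $S(AH/\epsilon)^{O(n)}\poly(L)$, as claimed.
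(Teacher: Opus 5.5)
Your proposal is correct and matches the paper's proof: you use \Cref{lem:mdp-reduction} to pass to the conditional MDP, then apply \Cref{lem:value-bounds} with width $\eta=\epsilon/H$ at $t=1$ and average over $\lambda$ to bound the gap by $H\eta=\epsilon$. Your running-time accounting adds detail the paper leaves implicit, namely that every $\ell_{i,t}(s)$ is a point of $G_\eta$, so the bit complexity of the auxiliary payoffs does not grow across stages.
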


\begin{proof}
We consider a player $i$ and a recommendation $\vecpi_i\in\supp(\rho_i)$. By~\Cref{lem:mdp-reduction}, the optimal conditional deviation value from the initial state distribution $\lambda$ is $\mathbb{E}_{s_1\sim\lambda}[V_{i,1}^{*,\vecpi_i}(s_1)]$. Following the recommendation yields $\mathbb{E}_{s_1\sim \lambda}[V_{i,1}^{\vecpi_i}(s_1)]$. Thus, \Cref{lem:value-bounds} implies
\begin{align*}
\mathbb{E}_{s_1\sim\lambda}
  [V_{i,1}^{*,\vecpi_i}(s_1)-V_{i,1}^{\vecpi_i}(s_1)] \le H \eta = \epsilon.
\end{align*}
The output is represented by the $S$ distributions $\mu_{t,s}$, which are to be sampled independently.
\end{proof}
\section{Hardness results}
\label{sec:PPAD}

We now turn to our hardness results. We begin by establishing \PPAD-hardness for constant-expectation CEs, and then translate those results to NFCEs in Markov games.

\subsection{Succinct multi-player games}
\label{sec:polymatrix}

We first examine succinct many-player games. Here, we require an output representation from which the marginal probabilities can be computed in polynomial time, such as a mixture of product distributions~\citep{Papadimitriou08:Computing}.

Our hardness result holds even in \emph{polymatrix games}. A \emph{polymatrix game} is specified by a graph $([n],E)$ and a payoff matrix $\matM_{ii'}$ for every edge $(i,i') \in E$. Let $N_i$ be the \emph{neighbor set} of player $i$. At a pure action profile $\veca$, player $i$ receives $u_i(\veca)=\sum_{i'\in N_i} \matM_{ii'}(a_i,a_{i'})$.  We say that the game is normalized if $u_i (\veca)\in[0,1]$ for every player and action profile. Our reduction will use the hardness result of~\citet{Deligkas24:PureCircuit}, which builds on~\citet{Rubinstein15:Inapproximability}.

\begin{theorem}[\citealp{Deligkas24:PureCircuit}]
\label{thm:ne-hardness}
For every constant $0 < \delta<1/12$, computing a $\delta$-Nash equilibrium in a normalized polymatrix game of maximum degree three with two actions per player is \PPAD-hard.
\end{theorem}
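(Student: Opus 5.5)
Since \Cref{thm:ne-hardness} is imported from \citet{Deligkas24:PureCircuit}, in the paper it is simply invoked. If I had to reprove it, I would follow their route: a reduction from the \textsc{Pure-Circuit} problem, which they show to be \PPAD-complete. An instance of \textsc{Pure-Circuit} is a circuit whose nodes take values in $\{0,1,\bot\}$. Its gates are $\mathrm{NOT}$, $\mathrm{OR}$ (or $\mathrm{NOR}$), and $\mathrm{PURIFY}$, where $\mathrm{PURIFY}$ has one input and two outputs. A solution is an assignment satisfying every gate robustly: each gate must output the correct Boolean value whenever its inputs are "pure enough", with $\bot$ allowed otherwise. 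Cycles are allowed in the circuit, which is where the fixed-point (\PPAD) nature comes from. Because each gate has constant fan-in and fan-out, the structure already has bounded degree.

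\textbf{Step 1: nodes as players.} Each circuit node becomes a player with two actions $\{0,1\}$. I read a mixed strategy $x_v\in[0,1]$ as a value by thresholding: $x_v\le\tau$ means $0$, $x_v\ge1-\tau$ means $1$, and anything in between means $\bot$, for a suitable constant $\tau$.

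\textbf{Step 2: gate gadgets.} Each gate becomes a polymatrix gadget in which the output player's payoff depends linearly on its inputs' mixed strategies. The gadgets are:
\begin{itemize}
\item $\mathrm{NOT}$: the output player $w$ with input $v$ gets $1/2$ for matching $1-v$, so at a $\delta$-Nash equilibrium $x_w$ is near $1-x_v$ whenever $x_v$ is pure.
\item $\mathrm{OR}$/$\mathrm{NOR}$: the output player compares $x_u+x_v$ against a threshold.
\item $\mathrm{PURIFY}$: the two output players compare $x_v$ against two different thresholds (for instance $1/3$ and $2/3$), so that at least one of them is pure.
\end{itemize}

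\textbf{Step 3: correctness.} Payoffs are linear in neighbours' strategies, so in a $\delta$-Nash equilibrium a player whose action gap exceeds $2\delta$ must put almost all its mass on the better action. One then chooses $\tau$ and the gadget thresholds so that every "pure enough" input pattern produces a payoff gap larger than the slack $\delta$. This forces the correct pure output, and decoding therefore yields a \textsc{Pure-Circuit} solution.

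\textbf{Step 4: degree and normalization.} To keep the maximum degree at three, I would replace any node with high fan-out by a tree of copy gadgets (two $\mathrm{NOT}$s in series). Finally, each player's payoffs are rescaled into $[0,1]$.

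\textbf{Main obstacle.} The hardest part is the quantitative bookkeeping in Step 3. After the $[0,1]$ normalization, every gadget's payoff gap must stay at least a fixed constant multiple of $\delta$, even though each player has up to three incident edges. Working out this slack is exactly what determines the admissible range $\delta<1/12$.
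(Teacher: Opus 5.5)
Deferring to \citet{Deligkas24:PureCircuit} is exactly what the paper does. It gives no proof of this statement and only invokes it in the proof of \Cref{thm:constant-expectation-hardness}, so that part of your proposal matches. The reproof sketch you add, however, would not go through as written. Step~3 applies the argument for \emph{well-supported} equilibria to an ordinary $\delta$-Nash equilibrium. In a $\delta$-NE of a two-action game, a payoff gap $g$ between the actions only bounds the mass on the worse action by $\delta/g$. So a gap above $2\delta$ gives mass below $1/2$, not ``almost all its mass on the better action.'' A gap merely ``larger than the slack $\delta$'' gives essentially nothing, since the worse action may then carry mass close to $1$. To decode an output as $\tau$-pure in the sense of your Step~1, you need $|g|\ge\delta/\tau$. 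Since normalized gaps are at most $1$, this is exactly where the admissible constant is decided.

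With the correct condition, the gadgets you specify, read with the thresholds of Step~1, do not reach $1/12$.
\begin{itemize}
\item Your $\mathrm{NOT}$ gadget pays $1/2$, so its gap is $(1-2x_v)/2$. Its local guarantee then requires $\delta\le\tau(1-2\tau)/2\le 1/16$.
\item A $\mathrm{PURIFY}$ output with threshold $1/3$ has slope at most $3/2$, because its gap is affine in $x_v$, vanishes at $1/3$, and lies in $[-1,1]$ at $x_v\in\{0,1\}$.
\item Pure inputs $x_v\le\tau$ then require $\delta\le\tau(1-3\tau)/2$.
\item At $x_v=1/2$ both outputs (thresholds $1/3$ and $2/3$) have gap only $1/4$, which requires $\delta\le\tau/4$.
\end{itemize}
The last two constraints together give at best $\delta\le 1/24$, attained at $\tau=1/6$. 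So your closing claim that the slack computation ``determines $\delta<1/12$'' is not supported by the sketch. Both the constant $1/12$ and a correct $\delta$-NE decoding argument have to be taken from the cited work, which is precisely what the paper does.
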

Let $\mathcal{G}$ be such a polymatrix game with $\Acal_i=\{0,1\}$ for every player. We may assume that the source game has no isolated players. We define the payoff difference
\[
    d_i(\veca_{-i})
    \defeq u_i(1,\veca_{-i})-u_i(0,\veca_{-i})
    =\sum_{i'\in N_i}
      \bigl(\matM_{ii'}(1,a_{i'})-\matM_{ii'}(0,a_{i'})\bigr).
\]
Normalization implies $d_i(\veca_{-i})\in[-1,1]$. For a constant $0<\epsilon<1/24$, we set $\alpha\defeq1/4+6\epsilon$, so $12\epsilon<\alpha<1/2$. We construct a transformed game $\mathcal{G}'$ on the same graph and action sets, with payoffs
\[
    u_i'(0,\veca_{-i})
    \defeq\frac12+\alpha d_i(\veca_{-i}) \text{ and }
    u_i'(1,\veca_{-i})
    \defeq\frac12+\frac12d_i(\veca_{-i}).
\]

This game is still a polymatrix game, as we point out below.

\begin{restatable}{lemma}{transformedpolymatrix}\label[lemma]{lem:transformed-polymatrix}
The game $\mathcal{G}'$ is a normalized polymatrix game of maximum degree three with two actions per player.
\end{restatable}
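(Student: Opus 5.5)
We need to show $\mathcal{G}'$ is a polymatrix game on the same graph (so maximum degree three and two actions per player are inherited), and that it is normalized. The natural route is to write $u_i'$ as a sum of pairwise edge terms, and then derive normalization from the bound $d_i\in[-1,1]$ already noted.

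\textbf{Polymatrix structure.} The key point is that $u_i'(a_i,\veca_{-i})=\tfrac12+c(a_i)\,d_i(\veca_{-i})$, where $c(0)=\alpha$ and $c(1)=\tfrac12$. Substituting the definition of $d_i$ gives
\[
u_i'(\veca)=\sum_{i'\in N_i}\Bigl(\tfrac{1}{2|N_i|}+c(a_i)\bigl(\matM_{ii'}(1,a_{i'})-\matM_{ii'}(0,a_{i'})\bigr)\Bigr).
\]
We therefore define, for each edge $(i,i')\in E$,
\[
\matM'_{ii'}(a_i,a_{i'})\defeq\frac{1}{2|N_i|}+c(a_i)\bigl(\matM_{ii'}(1,a_{i'})-\matM_{ii'}(0,a_{i'})\bigr).
\]
Each summand depends only on $(a_i,a_{i'})$, so $u_i'(\veca)=\sum_{i'\in N_i}\matM'_{ii'}(a_i,a_{i'})$. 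The constant $\tfrac12$ is split evenly over the neighbors, which is legitimate because $N_i\neq\emptyset$ by the assumption that there are no isolated players. The graph and action sets are unchanged, so the degree bound and the binary actions carry over.

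\textbf{Normalization.} By the normalization of $\mathcal{G}$, $d_i(\veca_{-i})=u_i(1,\veca_{-i})-u_i(0,\veca_{-i})\in[-1,1]$. Since $0<\alpha<\tfrac12$ (from $\alpha=\tfrac14+6\epsilon$ with $\epsilon<\tfrac1{24}$), both coefficients satisfy $c(a_i)\in(0,\tfrac12]$. Hence $|c(a_i)\,d_i(\veca_{-i})|\le\tfrac12$, and so $u_i'(\veca)\in[0,1]$ for every player and every profile.

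There is no real obstacle here. The only points needing care are the nonempty neighborhoods, which are required to distribute the constant over the edges, and the fact that normalization is a condition on total payoffs rather than on individual edge matrices. The latter means we never need to bound the entries of $\matM'_{ii'}$ separately.
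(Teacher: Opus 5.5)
Your proposal is correct and matches the paper's proof essentially line for line. The paper defines the same edge matrices $\matM'_{ii'}(a_i,a_{i'})$, splits the constant $\tfrac12$ as $\tfrac{1}{2|N_i|}$ across neighbors with coefficients $\alpha$ and $\tfrac12$ on the edge differences, and derives normalization from $d_i\in[-1,1]$ and $0<\alpha<\tfrac12$ (it also notes, as the later reduction needs, that the construction has polynomial bit complexity, which is immediate from your formulas).
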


For polymatrix games, the utilities can be expressed in terms of the marginal strategies.

\begin{restatable}{lemma}{marginalpayoffdifference}
\label[lemma]{lem:marginal-payoff-difference}
Let $\mu\in\DeltaSet(\Acal)$, and let $\vecx$ be its marginal strategy profile.  Then, for every player $i$,
\[
    \mathbb{E}_{\veca\sim\mu}[d_i(\veca_{-i})]
    =u_i(1,\vecx_{-i})-u_i(0,\vecx_{-i}).
\]
\end{restatable}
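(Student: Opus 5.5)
The plan is to expand both sides by linearity and exploit the fact that $d_i$ is additively separable across neighbors. By definition,
\[
    d_i(\veca_{-i})=\sum_{i'\in N_i}\bigl(\matM_{ii'}(1,a_{i'})-\matM_{ii'}(0,a_{i'})\bigr),
\]
so each summand depends on $\veca$ only through the single coordinate $a_{i'}$.

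Taking expectations under $\mu$ and applying linearity, the expectation of the $i'$-th summand depends only on the marginal law of $a_{i'}$ under $\mu$, which is $x_{i'}$. This gives
\[
    \mathbb{E}_{\veca\sim\mu}[d_i(\veca_{-i})]=\sum_{i'\in N_i}\sum_{a_{i'}\in\{0,1\}}x_{i'}(a_{i'})\bigl(\matM_{ii'}(1,a_{i'})-\matM_{ii'}(0,a_{i'})\bigr).
\]

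Next, I will evaluate the right-hand side. Here $u_i(b,\vecx_{-i})$ denotes the expected payoff of the fixed action $b$ against the product distribution $\vecx_{-i}$. By the polymatrix form, this equals
\[
    u_i(b,\vecx_{-i})=\sum_{i'\in N_i}\sum_{a_{i'}}x_{i'}(a_{i'})\,\matM_{ii'}(b,a_{i'}).
\]
Again, linearity applies, and each term involves only one marginal, so the product structure is irrelevant. Subtracting the expressions for $b=0$ from $b=1$ reproduces exactly the display above, which proves the lemma.

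There is no real obstacle. The only point to state carefully is that correlations in $\mu$ play no role, because every term depends on a single opponent's action. Consequently, the expectation under $\mu$ coincides with the expectation under the product of its marginals.
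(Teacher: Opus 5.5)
Your proposal is correct and follows essentially the same route as the paper. Both use linearity and the polymatrix separability of $d_i$ to reduce each side to the same double sum over neighbors' marginals, so correlations in $\mu$ play no role.
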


The following lemma now makes crucial use of the constant expectation property.

\begin{restatable}{lemma}{conditionalpayoffdifference}\label[lemma]{lem:conditional-payoff-difference}
Let $\mu$ be an $(\epsilon,\delta)$-constant-expectation CE of $\mathcal{G}'$. For any player $i$,
\[
    \left(\frac12-\alpha\right)
    \sum_{a_i\in\supp(p_i^\mu)}p_i^\mu(a_i)
    \max\left\{(1-2a_i)\mathbb{E}_\mu[d_i(\veca_{-i})\mid a_i],0\right\}
    \le\delta.
\]
If both actions have positive marginal probability, then
\[
    \min\{p_i^\mu(0),p_i^\mu(1)\}
    \left|\frac12\mathbb{E}_\mu[d_i(\veca_{-i})\mid a_i=1]
      -\alpha\mathbb{E}_\mu[d_i(\veca_{-i})\mid a_i=0]\right|
    \le\epsilon.
\]
\end{restatable}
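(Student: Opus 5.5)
The plan is to write the conditional payoffs in $\mathcal{G}'$ in terms of $d_i$ and then read off the two inequalities, the first from the CE constraint and the second from the constant-expectation constraint. Fix player $i$ and write $D_a \defeq \mathbb{E}_\mu[d_i(\veca_{-i})\mid a_i=a]$ for each supported $a\in\{0,1\}$. By definition of $u_i'$, the conditional payoffs for player $i$ are
\[
\mathbb{E}_\mu[u_i'(0,\veca_{-i})\mid a_i=a]=\tfrac12+\alpha D_a
\quad\text{and}\quad
\mathbb{E}_\mu[u_i'(1,\veca_{-i})\mid a_i=a]=\tfrac12+\tfrac12 D_a .
\]

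\emph{First inequality.} Consider the deviation $\phi_i$ that, for each supported $a_i$, switches to $1-a_i$ exactly when this is profitable in conditional expectation, and keeps $a_i$ otherwise. Its gain is
\[
\sum_{a_i} p_i^\mu(a_i)\max\{\mathbb{E}[u_i'(1-a_i,\cdot)-u_i'(a_i,\cdot)\mid a_i],0\}.
\]
For $a_i=0$ the switching gain is $(\tfrac12-\alpha)D_0$. For $a_i=1$ it is $(\alpha-\tfrac12)D_1$. Both equal $(\tfrac12-\alpha)(1-2a_i)D_{a_i}$, and $\tfrac12-\alpha>0$. The $\delta$-CE condition bounds this gain by $\delta$, which gives exactly the first display.

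\emph{Second inequality.} The conditional obedience payoffs are $c_0=\tfrac12+\alpha D_0$ and $c_1=\tfrac12+\tfrac12 D_1$. The constant-expectation condition yields some $v_i$ with
\[
p_0|c_0-v_i|+p_1|c_1-v_i|\le\epsilon,
\]
where $p_a \defeq p_i^\mu(a)$. By the triangle inequality,
\[
\min\{p_0,p_1\}\,|c_1-c_0|\le \min\{p_0,p_1\}\bigl(|c_1-v_i|+|c_0-v_i|\bigr)\le p_0|c_0-v_i|+p_1|c_1-v_i|\le\epsilon .
\]
Since $c_1-c_0=\tfrac12 D_1-\alpha D_0$, this is the second display.

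There is no real obstacle. The only care needed is the sign bookkeeping in the factor $(1-2a_i)$, and checking that the chosen $\phi_i$ is a legitimate deviation function, which it is: any choice on unsupported actions is irrelevant.
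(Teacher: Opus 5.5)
Your proof is correct and takes essentially the same route as the paper. The first bound comes from applying the $\delta$-CE constraint to the ``switch exactly when profitable'' deviation map, using $\tfrac12-\alpha>0$ to pull the factor out of the max. The second comes from the triangle inequality through $v_i$, with both weights bounded below by $\min\{p_i^\mu(0),p_i^\mu(1)\}$.
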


This lemma, together with simple algebra, implies a certain equilibrium collapse: the marginals of any constant-expectation CE reveal a Nash equilibrium of the original game.

\begin{restatable}{lemma}{marginalsarene}\label[lemma]{lem:marginals-are-ne}
If $\mu$ is an $(\epsilon,\delta)$-constant-expectation correlated equilibrium of $\mathcal{G}'$ and $\vecx$ is its marginal strategy profile, then $\vecx$ is an $\eta$-Nash equilibrium of $\mathcal{G}$, where $\eta = \epsilon/\alpha+\delta/(\alpha(1-2\alpha))$.
\end{restatable}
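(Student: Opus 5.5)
The plan is to fix a player $i$ and bound its regret under the product profile $\vecx$ in $\mathcal{G}$. Both ingredients needed here are already available: \Cref{lem:marginal-payoff-difference} expresses the relevant payoff gap through $\mu$, and \Cref{lem:conditional-payoff-difference} controls the two conditional gaps.

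\textbf{Setup.} Write $p_0=p_i^\mu(0)$, $p_1=p_i^\mu(1)$, and $D_a=\mathbb{E}_\mu[d_i(\veca_{-i})\mid a_i=a]$ whenever $p_a>0$. By \Cref{lem:marginal-payoff-difference},
\[
D \defeq u_i(1,\vecx_{-i})-u_i(0,\vecx_{-i})=\mathbb{E}_\mu[d_i]=p_0D_0+p_1D_1 .
\]
Since $x_i=(p_0,p_1)$, the regret of player $i$ in $\mathcal{G}$ is $p_0 D$ if $D>0$, and $p_1(-D)$ if $D<0$. So it suffices to show that both quantities are at most $\eta$.

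\textbf{Step 1 (first inequality of \Cref{lem:conditional-payoff-difference}).} Set $c\defeq 2\delta/(1-2\alpha)$. Since $\tfrac12-\alpha=(1-2\alpha)/2$, the first inequality gives
\[
p_0\max\{D_0,0\}\le c \quad\text{and}\quad p_1\max\{-D_1,0\}\le c .
\]

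\textbf{Step 2 (second inequality, when both actions are supported).} Multiply the second inequality of \Cref{lem:conditional-payoff-difference} by $2\max\{p_0,p_1\}$. This yields
\[
p_0p_1\,|D_1-2\alpha D_0|\le 2\epsilon .
\]
If only one action is supported, the argument is simpler. If $p_1=0$, then $D=D_0$ and the only possible regret is $D_0\le c$. If $p_0=0$, then $D=D_1$ and the only possible regret is $-D_1\le c$. In both cases $c\le\eta$ because $2\le 1/\alpha$.

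\textbf{Step 3, case $D>0$.} Using Step 2 in the form $p_0p_1D_1\le 2\epsilon+2\alpha p_0p_1D_0$, we get
\[
p_0D=p_0^2D_0+p_0p_1D_1\le (p_0+2\alpha p_1)\,p_0D_0+2\epsilon .
\]
Since $p_0+2\alpha p_1\le 1$, this is at most $\max\{p_0D_0,0\}+2\epsilon$. By Step 1 it is at most $c+2\epsilon$, which is $\le\eta$ because $2\le1/\alpha$.

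\textbf{Step 3, case $D<0$.} Rearranging Step 2 gives $-p_0p_1D_0\le \epsilon/\alpha-p_0p_1D_1/(2\alpha)$. Hence
\[
p_1(-D)=-p_0p_1D_0-p_1^2D_1\le \frac{\epsilon}{\alpha}+\Bigl(\frac{p_0}{2\alpha}+p_1\Bigr)(-p_1D_1).
\]
The coefficient satisfies $\frac{p_0}{2\alpha}+p_1\le \frac1{2\alpha}$ because $2\alpha<1$. Bounding $-p_1D_1$ by $\max\{-p_1D_1,0\}\le c$ from Step 1 gives
\[
p_1(-D)\le \frac{\epsilon}{\alpha}+\frac{c}{2\alpha}=\frac{\epsilon}{\alpha}+\frac{\delta}{\alpha(1-2\alpha)}=\eta .
\]
This case is exactly tight, and it is where the stated value of $\eta$ comes from.

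\textbf{Main obstacle.} The delicate step is the asymmetric algebra in the case $D<0$. There one must divide by $2\alpha$ to transfer control from $D_1$ to $D_0$, and the weights have to be arranged so that the coefficient on the one-sided bound for $D_1$ stays below $1/(2\alpha)$. I would first check that the sign manipulations (taking positive parts before applying Step 1) are legitimate in every subcase. Repeating the argument for each player $i$ shows that $\vecx$ is an $\eta$-Nash equilibrium of $\mathcal{G}$.
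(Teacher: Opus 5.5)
Your proof is correct and matches the paper's argument essentially step for step: the same single-support versus two-support split, and in the two-support case the same decompositions of $p_0D$ and $-p_1D$ (you derive them as rearranged inequalities where the paper writes exact identities), giving the bounds $2\epsilon+2\delta/(1-2\alpha)$ for $D>0$ and $\epsilon/\alpha+\delta/(\alpha(1-2\alpha))$ for $D<0$. The only step you leave implicit is $\max\{p_0,p_1\}\le 1$, which is what lets you pass from the second inequality of \Cref{lem:conditional-payoff-difference} to $p_0p_1|D_1-2\alpha D_0|\le 2\epsilon$.
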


Combining with the hardness result of~\citet{Deligkas24:PureCircuit} (\Cref{thm:ne-hardness}), we arrive at the following.

\begin{restatable}[\PPAD-hardness]{theorem}{constantexpectationhardness}\label{thm:constant-expectation-hardness}
For every pair of constants such that $0 < \epsilon < 1/24$ and $0 \leq \delta < ( 1 - 24 \epsilon)^2/96$, computing an $(\epsilon,\delta)$-constant-expectation correlated equilibrium in a normalized polymatrix game of maximum degree three with two actions per player is \PPAD-hard.
\end{restatable}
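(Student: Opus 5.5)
The proof is a polynomial-time reduction from \Cref{thm:ne-hardness}. Fix constants $0<\epsilon<1/24$ and $0\le\delta<(1-24\epsilon)^2/96$, and set $\alpha=1/4+6\epsilon$. Given a normalized binary-action polymatrix game $\mathcal{G}$ of maximum degree three with no isolated players, we build the transformed game $\mathcal{G}'$. This takes polynomial time: each edge matrix $\matM_{ii'}$ is rescaled by $\alpha$ or $1/2$ on the appropriate rows, and the constant $1/2$ is split among the (at least one) neighbors. By \Cref{lem:transformed-polymatrix}, $\mathcal{G}'$ is again a normalized polymatrix game of maximum degree three with two actions per player, so it is a legal instance of our problem.

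Suppose an oracle returns an $(\epsilon,\delta)$-constant-expectation CE $\mu$ of $\mathcal{G}'$, in a representation from which marginals are computable in polynomial time (for example, a mixture of product distributions). We read off the marginal profile $\vecx$. By \Cref{lem:marginals-are-ne}, $\vecx$ is an $\eta$-Nash equilibrium of $\mathcal{G}$, where
\[
\eta=\frac{\epsilon}{\alpha}+\frac{\delta}{\alpha(1-2\alpha)}.
\]
It remains to check that $\eta<1/12$. Then \Cref{thm:ne-hardness}, applied with the constant $\eta$, shows the problem is \PPAD-hard.

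The main step, and the only real computation, is verifying the inequality $\eta<1/12$ under the stated constraints. We have $\alpha=(1+24\epsilon)/4$ and $1-2\alpha=(1-24\epsilon)/2$. Hence
\[
\frac{\epsilon}{\alpha}=\frac{4\epsilon}{1+24\epsilon},\qquad
\alpha(1-2\alpha)=\frac{(1+24\epsilon)(1-24\epsilon)}{8}.
\]
The condition $\eta<1/12$ becomes
\[
\frac{4\epsilon}{1+24\epsilon}+\frac{8\delta}{(1+24\epsilon)(1-24\epsilon)}<\frac1{12}.
\]
Multiply through by $1+24\epsilon$. The first term gives $4\epsilon$, and the right-hand side gives $(1+24\epsilon)/12=1/12+2\epsilon$. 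So the condition is
\[
\frac{8\delta}{1-24\epsilon}<\frac1{12}-2\epsilon=\frac{1-24\epsilon}{12},
\]
that is, $\delta<(1-24\epsilon)^2/96$. This is exactly the hypothesis. Also $\epsilon<1/24$ keeps $\alpha<1/2$, so every denominator is positive.

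The case $\eta=0$ cannot arise, since $\epsilon>0$. If one prefers to avoid worrying about the range of $\eta$, note that any $\eta\in(0,1/12)$ is a valid constant for \Cref{thm:ne-hardness}. Because $\eta$ depends only on the fixed constants $\epsilon$ and $\delta$, the reduction is polynomial. This completes the proof.
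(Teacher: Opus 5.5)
Your proposal is correct and follows the paper's proof essentially verbatim: you build $\mathcal{G}'$, read off the marginals, invoke \Cref{lem:marginals-are-ne}, and verify that $\eta<1/12$ is equivalent to $\delta<(1-24\epsilon)^2/96$; the paper writes this condition as $\delta<(\alpha/12-\epsilon)(1-2\alpha)$, and your algebra agrees with it. One small wording point: the new edge matrices scale the row difference $\matM_{ii'}(1,\cdot)-\matM_{ii'}(0,\cdot)$ by $\alpha$ or $1/2$ (plus the split constant), rather than rescaling the rows of $\matM_{ii'}$ themselves, but since you rely on \Cref{lem:transformed-polymatrix} this does not affect the argument.
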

%In particular, this holds for $\epsilon=\delta=1/192$.

\subsection{Two-player games}
\label{sec:two-player-hardness}

We now turn our attention to two-player (bimatrix) games. We reduce from the problem of computing an exact Nash equilibrium of a rank-$2$ bimatrix game, which is \PPAD-hard~\citep{Chen:Rank2}. Let $\mathcal{G}$ be such a game, so the sum of its payoff matrices has rank at most two. We write
\begin{equation}
\label{eq:rank-two-decomposition}
    u_1(\veca)+u_2(\veca)
    =f_1(a_1)g_1(a_2)+f_2(a_1)g_2(a_2),
\end{equation}
where $f_1,f_2:\Acal_1\to[-1,1]$ and
$g_1,g_2:\Acal_2\to[-1,1]$. For a parameter $K\ge 3$, we consider the game $\mathcal{G}'$ on the same action sets with payoffs
\begin{align}
    u_1'(\veca)
    &\defeq \frac{u_1(\veca)+2g_2(a_2)+K(g_1(a_2)+2)}{4K},
    \label{eq:rank-two-transformation}\\
    u_2'(\veca)
    &\defeq \frac{u_2(\veca)+2f_1(a_1)+K(f_2(a_1)+2)}{4K}.
    \nonumber
\end{align}
As we point out below, $\mathcal{G}'$ is strategically equivalent to $\mathcal{G}$.

\begin{restatable}{lemma}{ranktwotransformation}\label[lemma]{lem:rank-two-transformation}
$\mathcal{G}'$ has payoffs in $[0,1]$, has rank at most two, and has the same set of CEs as $\mathcal{G}$.
Moreover, $\mu$ is a $\delta$-CE of $\mathcal{G}'$ if and only if it is a $(4K\delta)$-CE of $\mathcal{G}$.
\end{restatable}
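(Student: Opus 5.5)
The plan is to verify the three claims of \Cref{lem:rank-two-transformation} directly from the definition~\eqref{eq:rank-two-transformation}, using the rank-two decomposition~\eqref{eq:rank-two-decomposition}. The argument rests on one observation: each transformed payoff $u_i'$ equals a positive rescaling of $u_i$ plus a term depending only on the opponent's action. Such a term cancels from every deviation difference $u_i'(\phi_i(a_i),\veca_{-i})-u_i'(\veca)$, so it does not affect CE incentives.

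\textbf{Range.} Since $u_1\in[0,1]$ and $g_1,g_2\in[-1,1]$, the numerator of $u_1'$ satisfies
\[
u_1+2g_2+K(g_1+2)\ge 0-2+K\ge 1>0 \quad\text{for } K\ge 3,
\]
and
\[
u_1+2g_2+K(g_1+2)\le 1+2+3K\le 4K \quad\text{for } K\ge 3.
\]
Hence $u_1'\in[0,1]$, and the symmetric computation with $f_1,f_2$ gives the same for $u_2'$.

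\textbf{Rank.} Summing the two payoffs gives
\[
4K\bigl(u_1'+u_2'\bigr)=u_1+u_2+2g_2(a_2)+Kg_1(a_2)+2f_1(a_1)+Kf_2(a_1)+4K.
\]
Substituting~\eqref{eq:rank-two-decomposition}, I group the right-hand side as
\[
\bigl(f_1(a_1)+K\bigr)\bigl(g_1(a_2)+2\bigr)+\bigl(f_2(a_1)+2\bigr)\bigl(g_2(a_2)+K\bigr).
\]
To confirm this grouping, expand it: $f_1g_1+2f_1+Kg_1+2K+f_2g_2+Kf_2+2g_2+2K$. This matches the sum term by term. The sum matrix is therefore a sum of two outer products and has rank at most two. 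Finding this factorization is the only non-mechanical step, and the coefficients $2$ and $K$ in~\eqref{eq:rank-two-transformation} appear to be chosen precisely so that it closes.

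\textbf{Equilibria.} For player 1 and any deviation map $\phi_1$, the terms $2g_2(a_2)+K(g_1(a_2)+2)$ depend only on $a_2$ and cancel, so
\[
u_1'(\phi_1(a_1),a_2)-u_1'(\veca)=\frac{u_1(\phi_1(a_1),a_2)-u_1(\veca)}{4K}.
\]
Player 2 is symmetric. Taking expectations under $\mu$, the deviation gain of every player and every $\phi_i$ in $\mathcal{G}'$ is exactly $1/(4K)$ times the corresponding gain in $\mathcal{G}$. Thus $\mu$ is a $\delta$-CE of $\mathcal{G}'$ if and only if it is a $4K\delta$-CE of $\mathcal{G}$. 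Setting $\delta=0$ shows the two games have the same set of CEs. The same identity holds conditionally on $a_i$, so the conditional CE notion transfers as well, though the statement does not require it.
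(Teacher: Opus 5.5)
Your proposal is correct and follows the paper's proof: the same factorization $4K(u_1'+u_2')=(f_1(a_1)+K)(g_1(a_2)+2)+(f_2(a_1)+2)(g_2(a_2)+K)$ gives the rank bound, and cancelling the opponent-only terms scales every deviation gain by $1/(4K)$, which yields both CE claims. You also write out the range check for $K\ge 3$, which the paper simply calls immediate.
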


The role of constant expectation is to make recommendations nearly uninformative about opponent-dependent terms, on average. For Player~1, when $K \gg 1$, the term $Kg_1(a_2)$ dominates the remaining payoff. So, we can control the expected absolute deviation of the conditional expectation of $g_1(a_2)$ from its unconditional expectation. Player~2 similarly controls $f_2(a_1)$.
These are precisely the two correlations that can separate a correlated equilibrium from a Nash equilibrium, as we establish in~\Cref{lem:rank-two-amplification}.
In what follows, we use the notation $\operatorname{Cov}_\mu(X,Y)\defeq
\mathbb{E}_\mu[XY] - \mathbb{E}_\mu[X]\mathbb{E}_\mu[Y]$.

\begin{restatable}{lemma}{ranktwocovariance}\label[lemma]{lem:rank-two-covariance}
If $\mu$ is an $\eta$-CE of $\mathcal{G}$ with marginal
strategy profile $\vecx$, then
\[
    \sum_{i=1}^2\left(
        \max_{a_i' \in \mathcal{A}_i}u_i(a_i',\vecx_{-i})-u_i(\vecx)
    \right)
    \le 2\eta+\sum_{i=1}^2
    \operatorname{Cov}_\mu(f_i(a_1),g_i(a_2)).
\]
\end{restatable}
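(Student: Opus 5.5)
We prove \Cref{lem:rank-two-covariance} by decomposing the total regret of the marginal profile $\vecx$ into the regret of $\mu$, which the $\eta$-CE condition controls, plus correction terms coming from the correlation in $\mu$. The rank-two structure~\eqref{eq:rank-two-decomposition} then pins these correction terms down to exactly the two covariances.

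\emph{Step 1: deviation payoffs depend only on marginals.} For each player $i$ and each fixed $a_i'\in\Acal_i$, consider the constant deviation map $\phi_i\equiv a_i'$. The $\eta$-CE condition gives
\[
u_i(a_i',\vecx_{-i}) = \E_{\veca\sim\mu}[u_i(a_i',\veca_{-i})] \le \E_\mu[u_i(\veca)]+\eta .
\]
The equality holds because the payoff under a constant deviation depends on $\mu$ only through the opponent's marginal. Taking the maximum over $a_i'$ gives, for each $i$, $\max_{a_i'}u_i(a_i',\vecx_{-i})\le \E_\mu[u_i(\veca)]+\eta$. Summing over $i=1,2$ yields the $2\eta$ term and reduces the claim to
\[
\E_\mu[u_1(\veca)+u_2(\veca)] - \bigl(u_1(\vecx)+u_2(\vecx)\bigr) \le \sum_{i=1}^2 \operatorname{Cov}_\mu(f_i(a_1),g_i(a_2)).
\]

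\emph{Step 2: use the rank-two structure.} By~\eqref{eq:rank-two-decomposition}, $\E_\mu[u_1+u_2]=\sum_i \E_\mu[f_i(a_1)g_i(a_2)]$. Under the product distribution $\vecx$, the same sum equals $\sum_i \E_{x_1}[f_i]\,\E_{x_2}[g_i]$. Since the marginals of $\mu$ are $x_1$ and $x_2$, we have $\E_{x_1}[f_i]=\E_\mu[f_i(a_1)]$ and $\E_{x_2}[g_i]=\E_\mu[g_i(a_2)]$. The difference is therefore exactly $\sum_i\operatorname{Cov}_\mu(f_i(a_1),g_i(a_2))$, so the inequality in Step 1 holds with equality.

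There is no real obstacle. The one point needing care is to use only constant deviations $\phi_i\equiv a_i'$, one at a time: this is precisely the coarse-CE consequence of the $\eta$-CE definition, and it is all that is needed here.
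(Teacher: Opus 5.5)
Your proposal is correct and matches the paper's proof: you apply the $\eta$-CE condition to constant deviation maps to get $\max_{a_i'}u_i(a_i',\vecx_{-i})\le\E_\mu[u_i(\veca)]+\eta$ for each player. You then use the rank-two decomposition~\eqref{eq:rank-two-decomposition}, together with the fact that $\mu$ and the product $\vecx$ share marginals, to identify $\sum_i\bigl(\E_\mu[u_i(\veca)]-u_i(\vecx)\bigr)$ exactly with the sum of the two covariances.
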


As discussed, the averaged constant-expectation property suffices to control the covariance terms.

\begin{restatable}{lemma}{ranktwoamplification}\label[lemma]{lem:rank-two-amplification}
If $\mu$ is an $(\epsilon,\delta)$-constant-expectation correlated equilibrium of
$\mathcal{G}'$, then, for $i\in\{1,2\}$,
\[
    \left|\operatorname{Cov}_\mu(f_i(a_1),g_i(a_2))\right|
    \le 8\epsilon+\frac{6}{K}.
\]
\end{restatable}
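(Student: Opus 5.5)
We aim to bound $\operatorname{Cov}_\mu(f_1(a_1),g_1(a_2))$ using Player~1's averaged constant-expectation condition in $\mathcal{G}'$; the case $i=2$ is symmetric, with $f_2(a_1)$ playing the role of $g_1(a_2)$.

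The first step is to write the covariance in terms of conditional expectations given Player~1's recommendation. Let $\bar g \defeq \E_\mu[g_1(a_2)]$ and $m(a_1)\defeq \E_\mu[g_1(a_2)\mid a_1]$. Then
\[
\operatorname{Cov}_\mu(f_1(a_1),g_1(a_2))=\sum_{a_1\in\supp(p_1^\mu)} p_1^\mu(a_1) f_1(a_1)\bigl(m(a_1)-\bar g\bigr).
\]
Since $|f_1|\le 1$, it suffices to show
\[
\sum_{a_1} p_1^\mu(a_1)\,|m(a_1)-\bar g|\le 8\epsilon+6/K .
\]

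The second step uses \eqref{eq:rank-two-transformation}. Conditionally on $a_1$, we have
\[
4K\,\E_\mu[u_1'(\veca)\mid a_1]=K\,m(a_1)+2K+R(a_1),
\]
where $R(a_1)\defeq\E_\mu[u_1(\veca)+2g_2(a_2)\mid a_1]$. Since $u_1\in[0,1]$ and $|g_2|\le 1$, we get $|R(a_1)|\le 3$. By \Cref{def:constant-expectation} there is $v_1$ with $\sum_{a_1}p_1^\mu(a_1)\,|\E[u_1'\mid a_1]-v_1|\le\epsilon$. Set $c\defeq(4Kv_1-2K)/K$. Dividing the identity by $K$ gives
\[
|m(a_1)-c|\le \frac{4K}{K}\bigl|\E[u_1'\mid a_1]-v_1\bigr|+\frac{|R(a_1)|}{K}.
\]
Averaging over $a_1$ then yields
\[
\sum_{a_1}p_1^\mu(a_1)\,|m(a_1)-c|\le 4\epsilon+3/K .
\]

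The third step replaces the arbitrary center $c$ by the true mean $\bar g=\sum_{a_1}p_1^\mu(a_1)m(a_1)$. By the triangle inequality, $|\bar g-c|\le\sum_{a_1}p_1^\mu(a_1)|m(a_1)-c|$. Hence
\[
\sum_{a_1} p_1^\mu(a_1)\,|m(a_1)-\bar g|\le 2\bigl(4\epsilon+3/K\bigr)=8\epsilon+6/K,
\]
which is exactly the claimed constant.

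The only delicate point is this recentering, which costs a factor of $2$ and explains the constants. The $\delta$-CE part of the definition is not used. The symmetric argument for Player~2 conditions on $a_2$, writes the covariance as $\sum_{a_2}p_2^\mu(a_2)g_2(a_2)\bigl(\E[f_2(a_1)\mid a_2]-\E[f_2(a_1)]\bigr)$, and uses $|g_2|\le1$ together with $|u_2+2f_1|\le 3$.
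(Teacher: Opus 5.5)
Your proof is correct and follows essentially the same route as the paper. Both arguments write the covariance through the deviations of $\E_\mu[g_1(a_2)\mid a_1]$ from $\E_\mu[g_1(a_2)]$, use $|f_1|\le 1$, and transfer the averaged constant-expectation condition through the dominant $K g_1(a_2)$ term. The only difference is where you recenter: the paper first replaces $v_1$ by $\bar u_1=\E_\mu[u_1'(\veca)]$ (costing $2\epsilon$) and absorbs a $6/K$ residual, whereas you center at $c=4v_1-2$ with a $3/K$ residual and recenter at the end, which doubles both terms to the same $8\epsilon+6/K$.
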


Combining, we arrive at the following consequence, where we assume that $\epsilon$ and $\delta$ are given as part of the input in binary, so that $\log(1/\epsilon)$ and $\log(1/\delta)$ are polynomial.

\begin{restatable}{theorem}{twoplayerconstantexpectationhardness}\label[theorem]{thm:two-player-constant-expectation-hardness}
Computing an $(\epsilon,\delta)$-constant-expectation correlated equilibrium is \PPAD-hard in bimatrix games when $\epsilon$ and $\delta$ are exponentially small.
\end{restatable}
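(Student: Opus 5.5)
The reduction is from exact Nash equilibria of rank-$2$ bimatrix games, which are \PPAD-hard by \citet{Chen:Rank2}. The key fact is that such games have rational Nash equilibria of polynomial bit complexity, and an approximate equilibrium with exponentially small error can be rounded to an exact one. Concretely, given $\mathcal{G}$ with payoffs of bit complexity $L$, there is a polynomial $p$ such that any $2^{-p(L)}$-Nash equilibrium $\vecx$ of $\mathcal{G}$ can be converted in polynomial time into an exact Nash equilibrium. To do this, guess the supports of the best responses from the approximate payoffs, i.e.\ the actions within $2^{-p(L)}$ of optimal, and solve the resulting linear feasibility system. This is the standard support-rounding argument used for bimatrix games, and I would invoke it as a black box. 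So it suffices to show that, from an $(\epsilon,\delta)$-constant-expectation CE $\mu$ of $\mathcal{G}'$, we can compute a $\theta$-Nash equilibrium of $\mathcal{G}$ with $\theta = 2^{-p(L)}$, provided $\epsilon,\delta,1/K$ are chosen with $\log(1/\epsilon),\log(1/\delta),\log K = \poly(L)$.

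Given $\mathcal{G}$, first compute the decomposition~\eqref{eq:rank-two-decomposition}, scaled so that $f_j,g_j\in[-1,1]$. This is a polynomial-time linear-algebra step; I would assume the rank-$2$ hardness instance supplies it, or obtain it by factoring the rank-$2$ matrix and rescaling. Next build $\mathcal{G}'$ via~\eqref{eq:rank-two-transformation} with $K = 2^{q(L)}$. Its payoffs have polynomial bit complexity, and by \Cref{lem:rank-two-transformation} they lie in $[0,1]$. Suppose we are given an $(\epsilon,\delta)$-constant-expectation CE $\mu$ of $\mathcal{G}'$, in explicit form since the game has two players. By \Cref{lem:rank-two-transformation}, $\mu$ is a $4K\delta$-CE of $\mathcal{G}$. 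Applying \Cref{lem:rank-two-covariance} with $\eta = 4K\delta$ and then \Cref{lem:rank-two-amplification} gives that the marginal profile $\vecx$ satisfies
\[
\sum_{i=1}^2\Bigl(\max_{a_i'}u_i(a_i',\vecx_{-i})-u_i(\vecx)\Bigr)\le 8K\delta+16\epsilon+\frac{12}{K}.
\]
Each summand is nonnegative, so $\vecx$ is a $\theta$-Nash equilibrium of $\mathcal{G}$ with $\theta = 8K\delta+16\epsilon+12/K$.

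Now choose the parameters. Take $K = 2^{p(L)+5}$, so that $12/K<\theta/3$ with $\theta=2^{-p(L)}$. Then take $\epsilon = 2^{-p(L)-6}$ and $\delta = 2^{-2p(L)-11}$, so that $8K\delta$ and $16\epsilon$ are each below $\theta/3$. All of these have polynomially many bits. Computing $\vecx$ from $\mu$ is trivial, and rounding yields an exact Nash equilibrium of $\mathcal{G}$, which completes the reduction.

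The main obstacle is the interplay between $K$ and $\delta$. Increasing $K$ shrinks the covariance error $6/K$ in \Cref{lem:rank-two-amplification}, but it also inflates the CE error to $4K\delta$ through the rescaling in \Cref{lem:rank-two-transformation}. This is exactly why the hardness needs exponentially small precision: $\delta$ must be taken as small as roughly $\theta/K\approx\theta^2$. A secondary point to verify is that the rounding step applies to rank-$2$ games with the decomposition's rational bit sizes. Since $\mathcal{G}$ is an ordinary bimatrix game with rational entries, the standard polynomial bound on equilibrium bit complexity applies directly.
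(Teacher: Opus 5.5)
Your proposal is correct and follows the paper's proof: you use the same reduction from exact Nash equilibria of rank-$2$ games and the same chain \Cref{lem:rank-two-transformation}, \Cref{lem:rank-two-covariance}, \Cref{lem:rank-two-amplification}, giving the bound $8K\delta+16\epsilon+12/K$, with the same scaling $K\sim 1/\theta$, $\epsilon\sim\theta$, $\delta\sim\theta^2$ and the same black-box rounding of a $\theta$-Nash equilibrium to an exact one (the paper does this via an LP). There are two small slips that do not affect correctness: with $K=2^{p(L)+5}$ you get $12/K=\tfrac38\theta$, which is not below $\theta/3$, although the three terms still total $\tfrac18\theta+\tfrac14\theta+\tfrac38\theta=\tfrac34\theta<\theta$; and in your rounding sketch the support threshold must be much coarser than the approximation level (e.g.\ $\sqrt{\theta}$), since with threshold $\theta$ the probability mass lying outside the chosen supports is not controlled.
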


\subsection{Hardness for NFCE in Markov games}

We finally circle back to Markov games. Our main hardness result establishes that relying on constant-expectation CEs in backward induction is fundamental:

\begin{restatable}{theorem}{nfceTransfer}
\label{thm:nfce-transfer}
    Let $\cG$ be an $n$-player normal-form game with utilities bounded in $[0,1]$, and $0\le\eps\le 1/(16 n)$. There is a Markov game $\cM$ with $n+1$ players, horizon $H=2$, $S=n+1$ states, and stage rewards bounded in $[0, 1/2]$, such that every $\eps$-NFCE with independent state recommendations yields a $(16n\eps,16\eps)$-constant-expectation CE of $\cG$.
\end{restatable}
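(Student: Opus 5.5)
The plan is to build a two-stage Markov game in which the original game $\cG$ is played at stage $2$. Stage $1$ gives each player $i$ an outside option, and the extra player $n+1$ sets how attractive that option is. Because each player $i$ sees its stage-$2$ recommendation $a_i$ in advance, normal-form obedience at stage $1$ must hold conditionally on $a_i$; this is exactly the effect in \Cref{fig:failbackward}. The incentives of player $n+1$ should then make the conditional obedience payoffs $\E[u_i\mid a_i]$ nearly independent of $a_i$.

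\textbf{Construction.} Let $\Scal_1=\{s_1,\dots,s_n\}$ with $\lambda$ uniform, and $\Scal_2=\{s_G\}$, giving $S=n+1$ states.
\begin{itemize}
\item At $s_i$, only player $i$ and player $n+1$ act. Player $i$ chooses $\mathsf{Enter}$ or $\mathsf{Exit}$, and player $n+1$ chooses a binary (or lightly mixed) ``price'' that sets the exit payoff.
\item $\mathsf{Enter}$ moves to $s_G$, where players $1,\dots,n$ play $\cG$ with rewards $u_i/2$. $\mathsf{Exit}$ terminates with a stage-$1$ reward equal to the chosen price.
\item Player $n+1$'s rewards at $s_i$ are a scoring rule: it prefers a high price exactly when player $i$ exits, and a low price when player $i$ enters. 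This rewards $n+1$ for keeping player $i$ indifferent between entering and exiting.
\end{itemize}
All rewards are kept in $[0,1/2]$.

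\textbf{Steps.}
\begin{enumerate}
\item Fix an $\eps$-NFCE $\rho$ with independent state recommendations, and let $\mu$ be its marginal at $s_G$. With probability $1$, or after conditioning on $\mathsf{Enter}$ being recommended, the recommendation reaches $s_G$. The candidate output is $\mu$.
\item Show that $\mu$ is a $16\eps$-CE of $\cG$. Any stage-$2$ deviation map $\phi_i$ in $\cG$ lifts to a policy deviation in $\cM$. The factor $16$ absorbs the $1/2$ scaling of rewards, the $1/n$ probability of starting at $s_i$, and the probability that $\mathsf{Enter}$ is recommended.
\item Derive the lower-side inequality from player $i$'s normal-form deviation at $s_i$. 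By independence across states (\Cref{lem:factorization}), the price recommendation at $s_i$ is independent of $a_i$. So a deviation ``exit whenever $a_i$ is a bad recommendation'' gains $\sum_{a_i} p_i(a_i)\,(\bar v_i-\E_\mu[u_i\mid a_i])^+$ up to scaling, where $\bar v_i$ is the expected price. This quantity must therefore be at most $O(n\eps)$.
\item Derive the upper-side inequality from player $n+1$'s incentive. Raising or lowering the price pays off unless $\bar v_i$ matches player $i$'s average obedience payoff. Combined with the previous step, this gives $\sum_{a_i}p_i(a_i)\bigl|\E_\mu[u_i\mid a_i]-\bar v_i\bigr|\le 16n\eps$ with $v_i\defeq\bar v_i$.
\end{enumerate}

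\textbf{Main obstacle.} The hard part is the last step: designing player $n+1$'s payoffs so that its obedience forces the two-sided averaged bound, and not merely one side. Player $n+1$ does not observe $a_i$. So its incentive must be phrased through player $i$'s recommended $\mathsf{Enter}/\mathsf{Exit}$ action at $s_i$, whose correlation with the price is allowed within the state. First I would try a zero-sum-like game at $s_i$ between player $i$ and player $n+1$, whose equilibria make player $i$ indifferent in aggregate. I would then use the fact that any $a_i$ with $\E[u_i\mid a_i]>\bar v_i$ creates slack that, via the $(\cdot)^+$ bound and the identity $\sum_{a_i} p_i(a_i)\,(\E[u_i\mid a_i]-\bar v_i)=\E_\mu[u_i]-\bar v_i$, converts the one-sided bound into the absolute one.

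Careful bookkeeping of the $1/n$ start probability is what produces the asymmetric constants $16n\eps$ versus $16\eps$ in the statement. It is also why the hypothesis $\eps\le 1/(16n)$ is needed: it makes the recommendation to $\mathsf{Enter}$ carry non-negligible probability.
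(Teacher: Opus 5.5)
Your architecture matches the paper's: an outside option at $s_i$ versus continuing into $\cG$, a tester who controls the option, independence across states so that the tester's stage-$1$ action is independent of $a_i$, and converting a positive-part bound into an absolute one. But the construction fails as specified, at exactly the step you flag as the main obstacle.

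\textbf{The gap.} You make the tester prefer a high exit price when player $i$ exits and a low one when $i$ enters. A high exit price is precisely what makes exiting attractive to player $i$, so the game at $s_i$ is a \emph{coordination} game with self-fulfilling pure equilibria. Two examples show this.
\begin{itemize}
\item If the high price is $1/2$, recommending $(\mathsf{Exit},\text{high})$ at every $s_i$ is an exact NFCE for every $\mu$, since $s_G$ is never reached.
\item If the low price is $0$, recommending $(\mathsf{Enter},\text{low})$ at every $s_i$, together with any CE $\mu$ of $\cG$, is an exact NFCE. Exiting yields $0\le\E_\mu[u_i\mid a_i]/2$ for every $a_i$, and the tester is already paid.
\end{itemize}
So nothing lower-bounds the probability of reaching $\cG$; Step~2 needs this, but you only assert it. Nothing forces constant expectation either. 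In the second equilibrium the expected price is $0$, so $\E_\mu[u_i]-\bar v_i$ is large, and your identity cannot turn the one-sided bound into an absolute one.

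Separately, Step~3's gain formula tacitly uses ``always $\mathsf{Enter}$'' as the on-path baseline. When player $i$'s stage-$1$ recommendation is mixed and correlated with the price, the informed deviation must be compared with $i$'s actual on-path payoff. That payoff has to be pinned down by the tester's incentives, which your design does not provide.

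\textbf{The fix.} Reverse the orientation so the tester anti-coordinates. The paper uses matching pennies at $s_i$: player $i$ gets $2$ on a match, the tester gets $2$ on a mismatch, and heads continues to $\cG$. Let $v_i=\E_\mu[u_i]$. By independence across states, this is strategically a zero-sum game with row payoffs $2+v_i,\,v_i$ for heads and $0,\,2$ for tails, and its value is $V_i=1+v_i/2$. The argument then runs as follows.
\begin{itemize}
\item Since $s_i$ is reached with probability $1/n$, state-confined deviations give $|\tilde V_i-V_i|\le n\eps$ for player $i$'s on-path payoff $\tilde V_i$.
\item The tester's deviation to heads gives $\tilde V_i\le(2+v_i)q_i+n\eps$, where $q_i$ is the probability that player $i$ plays heads. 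Hence $q_i\ge 1/2-n\eps\ge 1/4$, so $\cG$ is reached with probability at least $1/4$ and $\mu$ is a $4\eps$-CE.
\item Let $p_i$ be the tester's heads probability at $s_i$, and write $u_i(a_i)\defeq\E_\mu[u_i(\veca)\mid a_i]$. The deviation that picks heads or tails from $a_i$ alone earns $\sum_{a_i}p_i^\mu(a_i)\max\{2p_i+u_i(a_i),\,2-2p_i\}=V_i+\tfrac12\sum_{a_i}p_i^\mu(a_i)|(2-4p_i)-u_i(a_i)|$. This holds because the two options always sum to $2+u_i(a_i)$, whatever $p_i$ is.
\end{itemize}
Comparing the last payoff with $\tilde V_i+n\eps\le V_i+2n\eps$ gives the two-sided bound $4n\eps$ with $v_i'=2-4p_i$. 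Dividing all rewards by $4$ yields $(16n\eps,16\eps)$ under $\eps\le 1/(16n)$.
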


By independent state recommendations, we mean that the joint action recommendations are sampled independently across pairs $(t, s)$; recommendations to different players at the same state may be correlated. Combining the above result with \Cref{thm:constant-expectation-hardness} and \Cref{thm:two-player-constant-expectation-hardness}, we therefore immediately derive \PPAD-hardness of computing NFCE with independent state recommendations.
\begin{corollary}
\label{cor:nfce-independent-hardness}
Computing an $\eps$-NFCE with independent state recommendations is \PPAD-hard for Markov games satisfying~\Cref{ass:normalization}, even with horizon $H=2$, in each of the following cases:
\begin{enumerate}
    \item $(n+1)$-player succinct games, two actions per player at each state, and $\eps=1/(3072n)$.
    \item Three-player games, a constant number of states, and exponentially small $\eps$.
\end{enumerate}
\end{corollary}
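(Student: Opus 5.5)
The plan is to derive the corollary by composing \Cref{thm:nfce-transfer} with the two normal-form hardness results. An $\eps$-NFCE with independent state recommendations of the Markov game $\cM$ built from a normal-form game $\cG$ yields a $(16n\eps,16\eps)$-constant-expectation CE of $\cG$. It therefore suffices to pick $\eps$ so that these parameters fall in the \PPAD-hard regime of \Cref{thm:constant-expectation-hardness} or \Cref{thm:two-player-constant-expectation-hardness}. We also need to check that the reduction is polynomial time: $\cM$ can be written down efficiently, and the extraction of the constant-expectation CE from the NFCE is efficient in the required output representation.

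\textbf{Case (i).} Let $\cG$ be a hard normalized binary-action polymatrix game of degree at most three. Apply \Cref{thm:nfce-transfer} with $\eps=1/(3072n)\le 1/(16n)$. This gives a $(16n\eps,16\eps)=(1/192,\,1/(192n))$-constant-expectation CE of $\cG$. Since $\delta$ only needs to be at most a constant, $\delta\le 1/192$ suffices. The parameters $\epsilon'=\delta'=1/192$ are admissible for \Cref{thm:constant-expectation-hardness}, because $1/192<1/24$ and $(1-24/192)^2/96=(7/8)^2/96\approx 0.00797>1/192\approx0.0052$. A constant-expectation CE with smaller $\delta$ is in particular one with $\delta=1/192$, so hardness transfers.

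The remaining checks for case (i) are as follows. $\cM$ has $n+1$ players and $n+1$ states. Each state's stage game should inherit the succinct (polymatrix) structure and have two actions per player; I would verify this from the construction behind \Cref{thm:nfce-transfer}. The stage rewards lie in $[0,1/2]$ with $H=2$, so \Cref{ass:normalization} ($r\le 1/H=1/2$) holds directly.

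For output representation, the NFCE is given with marginals computable in polynomial time at each state. The extracted constant-expectation CE is presumably the local joint distribution at some state, so its marginals stay computable. This is the step I expect to need the most care: confirming that the map from NFCE to constant-expectation CE in \Cref{thm:nfce-transfer} is efficient for succinct representations.

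\textbf{Case (ii).} Let $\cG$ be the two-player game from \Cref{thm:two-player-constant-expectation-hardness}, where the hard parameters $\epsilon',\delta'$ are exponentially small. Choose $\eps=\min\{\epsilon'/32,\delta'/16,1/32\}$ with $n=2$. Then $\eps$ is exponentially small but has polynomial bit length, and the NFCE yields a $(32\eps,16\eps)$-constant-expectation CE, which satisfies the required bounds. $\cM$ has three players and $n+1=3$ states, a constant, and the payoff tables have polynomial size, so the reduction is polynomial. Normalization holds as before.

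Both cases are therefore immediate once \Cref{thm:nfce-transfer} is invoked. The only real work is the parameter arithmetic and confirming that the construction's size and representation are polynomial.
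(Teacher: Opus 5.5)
Your proposal is correct and follows the same route as the paper. The paper derives the corollary directly by composing \Cref{thm:nfce-transfer} with \Cref{thm:constant-expectation-hardness} and \Cref{thm:two-player-constant-expectation-hardness}. Your choice $\eps=1/(3072n)$ gives $(16n\eps,16\eps)\preceq(1/192,1/192)$, which matches the paper's intended parameters, and in the two-player case your scaling of an exponentially small $\eps$ with $n=2$ works.
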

\section{Conclusion}

We established an efficient algorithm for computing conditional normal-form correlated equilibria in finite-horizon Markov games with a fixed number of players. Our key insight is that constant-expectation correlated equilibria, combined with independent sampling across states, make backward induction sound even when players observe their entire recommended policies before play. Our results suggest several interesting questions for future research. First, our algorithm is neither decentralized nor model free. Addressing this limitation is an important challenge for future work, but we believe our work lays the groundwork for such extensions. It would also be interesting to understand whether there are efficient algorithms when the number of players is part of the input. Our \PPAD-hardness rules this out under state independent recommendations, but perhaps one can circumvent that result by introducing correlation between states. Finally, the complexity of computing NFCEs in other game representations, most notably the extensive form, remains a challenging open question.

\section*{AI Use Statement}

The initial idea for this paper, and the main idea for the proof of \Cref{theorem:informal}, are due to the authors. Various versions of GPT were used to assist in proofs of some of the lower bounds. All proofs were checked, understood, and rewritten by the authors.

\section*{Acknowledgments}

I.A. and T.S. are supported by NIH award A240108S001, the Vannevar Bush Faculty Fellowship ONR N00014-23-1-2876, and National Science Foundation grant RI-2312342. We are grateful to Xi Chen and Mihalis Yannakakis for sharing their manuscript~\citep{Chen:Rank2}. I.A. additionally thanks Ioannis Panageas for helpful discussions concerning constant-rank games. C.D. was supported by a Simons Investigator Award, a Simons Collaboration on Algorithmic Fairness, ONR MURI grant N00014-25-1-2116, ONR grant N00014-25-1-2296. G.F. was supported in part by the National Science Foundation award CCF-2443068, the Office of Naval Research grant N000142512296, and an AI2050 Early Career Fellowship.

\bibliographystyle{plainnat}
\bibliography{refs}

\clearpage
\appendix

\section{Omitted proofs}

This section contains the omitted proofs. 

\subsection{Proofs from Section~\ref{sec:cece}}

We begin with~\Cref{prop:grid-lp}.

\gridlp*

\begin{proof}
By Nash's theorem~\citep{Nash51:Noncooperative}, $\mathcal{G}$ admits a Nash equilibrium $\vecx=(\vecx_i)_{i=1}^n$. Let $\mu=\bigotimes_i \vecx_i$ be the induced product distribution, which is clearly a correlated equilibrium. Every action $a_i\in\supp(\vecx_i)$ is a best response to $\vecx_{-i}$, so $\mathbb{E}_{\veca\sim\mu}[u_i(\veca)\mid a_i] = u_i(a_i,\vecx_{-i}) $.  For $a_i\notin\supp(\vecx_i)$, both $p_i^\mu(a_i)$ and $w_i^\mu(a_i)$ are zero. For $a_i \in \supp(\vecx_i)$, let $\ell_i\defeq \epsilon \lfloor u_i(a_i,\vecx_{-i}) /\epsilon\rfloor$.
Since $u_i(a_i,\vecx_{-i}) \in[0,1]$, we have $\ell_i\in G_\epsilon$ and
$\ell_i\le u_i(a_i,\vecx_{-i}) < \ell_i+\epsilon$. This shows that $\mu$ must be a feasible point of the LP induced by $\vecell$.

Conversely, every feasible solution to~\eqref{eq:grid-lp} is a correlated equilibrium. Whenever $p_i^\mu(a_i)>0$, dividing the last line by $p_i^\mu(a_i)$ gives
$\ell_i\le \mathbb{E}_{\veca\sim\mu}[u_i(\veca)\mid a_i]\le \ell_i+\epsilon$. Thus, choosing $v_i=\ell_i+\epsilon/2$ bounds the absolute deviation by $\epsilon/2$ for every supported recommendation, and hence also after averaging over recommendations, giving an $(\epsilon/2,0)$-constant-expectation correlated equilibrium as required by~\Cref{def:constant-expectation}.

Finally, the claimed running time follows because $G_\epsilon^n$ contains $(1/\epsilon)^{O(n)}$ points, and each induced linear program can be solved in time polynomial in $A^n(L + \log(1/\epsilon))$.
\end{proof}

\subsection{Proofs from Section~\ref{sec:Markov}}

We continue with the proofs from~\Cref{sec:Markov}.

\normaliz*

\begin{proof}
At stage $H+1$, the claim holds because the continuation values are zero. Suppose it holds at stage $t+1$. By the assumed normalization (\Cref{ass:normalization}) and the definition of the auxiliary normal-form games, we have
\[
    0\le g_{i,t,s}(\veca)
    \le \frac1H+\frac{H-t}{H}
    =\frac{H-t+1}{H}.
\]
Since \(\ell_{i,t}(s)\ge0\) belongs to the grid and is bounded above by a supported action’s conditional expected auxiliary payoff, the same upper bound holds for \(\ell_{i,t}(s)\).
\end{proof}

\factoriz*

\begin{proof}
Let $\rho_i(\vecpi_i)>0$. Independence in the constructed distribution~\eqref{eq:local-sampling} gives 
\begin{equation*}
\rho_i(\vecpi_i)
    = \prod_{(t,s)\in\Scal} \mu_{i,t,s}(\pi_{i,t}(s)).    
\end{equation*}
Moreover, for any
$(\veca_{-i,t,s})_{(t,s)\in\Scal}$,
\begin{align*}
\mathbb{P}\bigl[
    \vecpi_{-i,t}(s)=\veca_{-i,t,s}\text{ for all $(t,s)$}
    \mid \vecpi_i
  \bigr] =
\frac{
  \prod_{(t,s)\in\Scal}
  \mu_{t,s}(\pi_{i,t}(s),\veca_{-i,t,s})
}{
  \prod_{(t,s)\in\Scal}
  \mu_{i,t,s}(\pi_{i,t}(s))
}
=
\prod_{(t,s)\in\Scal}
\nu_{i,t,s}^{\vecpi_i}(\veca_{-i,t,s}),
\end{align*}
as claimed.
\end{proof}

\mdpreduction*

\begin{proof}
\Cref{cor:beliefs} shows that after every history at $(t,s)$, and for every deviating action $a_i'$, the conditional expected immediate reward and transition probabilities are given by~\eqref{eq:mdp-reward} and~\eqref{eq:mdp-transition}, respectively. Applying iterated expectation over the stages, the expected cumulative reward in the Markov game conditional on $\vecpi_i$ is equal to that in $\Mcal_i(\vecpi_i)$.
\end{proof}

\onestep*

\begin{proof}
Because $\vecpi_i\in\supp(\rho_i)$, the recommendation $a_i$ has positive marginal probability under $\mu_{t,s}$. Thus,
\[
    \mathbb{E}_{\veca_{-i}\sim\nu_{i,t,s}^{\vecpi_i}}[g_{i,t,s}(a_i,\veca_{-i})]
    \ge
    \mathbb{E}_{\veca_{-i}\sim\nu_{i,t,s}^{\vecpi_i}}[g_{i,t,s}(a_i',\veca_{-i})].
\]
Combining this with~\eqref{eq:state-constant-expectation} gives the claim.
\end{proof}

\valuebounds*

\begin{proof}
We proceed by backward induction, starting from the lower bound. At $t=H+1$, the claim is immediate. Suppose that it holds at stage $t+1$, and let $a_i\defeq\pi_{i,t}(s)$. Using~\eqref{eq:obedient-bellman}, the inductive hypothesis
$V_{i,t+1}^{\vecpi_i}(s')\ge\ell_{i,t+1}(s')$, and~\Cref{lem:one-step}, we obtain
\begin{align*}
    V_{i,t}^{\vecpi_i}(s)
    &\ge
    \mathbb{E}_{\veca_{-i}\sim\nu_{i,t,s}^{\vecpi_i}}
    \left[
      r_{i,t}(s,a_i,\veca_{-i})
      +\sum_{s'}\mathbb{P}_t(s'\mid s,a_i,\veca_{-i})
       \ell_{i,t+1}(s')
    \right]\\
    &=\mathbb{E}_{\veca_{-i}\sim\nu_{i,t,s}^{\vecpi_i}}
      [g_{i,t,s}(a_i,\veca_{-i})]
    \ge \ell_{i,t}(s).
\end{align*}

For the upper bound, the Bellman recursion, the inductive hypothesis
$V_{i,t+1}^{*,\vecpi_i}(s') \le \ell_{i,t+1}(s')+(H-t)\epsilon$, and~\Cref{lem:one-step} give
\begin{align*}
    V_{i,t}^{*,\vecpi_i}(s)
    &\le
    \max_{a_i'\in \Acal_{i,t,s}}
    \mathbb{E}_{\veca_{-i}\sim\nu_{i,t,s}^{\vecpi_i}}
    \left[
      g_{i,t,s}(a_i',\veca_{-i})+(H-t)\epsilon
    \right]\\
    &\le \ell_{i,t}(s)+\epsilon+(H-t)\epsilon\\
    &=\ell_{i,t}(s)+(H-t+1)\epsilon.
\end{align*}
This completes the induction.
\end{proof}

\subsection{Proofs for Section~\ref{sec:PPAD}}

We conclude with the proofs from~\Cref{sec:PPAD}, starting from our hardness result for polymatrix games.

\transformedpolymatrix*

\begin{proof}
For each edge $(i,i')$, we define
\begin{align*}
    \matM_{ii'}'(0,a_{i'})
    &\defeq\frac{1}{2|N_i|}
      +\alpha\bigl(\matM_{ii'}(1,a_{i'})-\matM_{ii'}(0,a_{i'})\bigr),\\
    \matM_{ii'}'(1,a_{i'})
    &\defeq\frac{1}{2|N_i|}
      +\frac12\bigl(\matM_{ii'}(1,a_{i'})-\matM_{ii'}(0,a_{i'})\bigr).
\end{align*}
Summing over $i'\in N_i$ gives the specified payoffs. Because $d_i\in[-1,1]$ and $0<\alpha<1/2$, both payoffs lie in $[0,1]$. The interaction graph and action sets remain the same, and the construction has polynomial bit complexity.
\end{proof}

\marginalpayoffdifference*

\begin{proof}
Linearity and the polymatrix structure give
\begin{align*}
    \mathbb{E}_{\veca\sim\mu}[d_i(\veca_{-i})]
    &=\sum_{i'\in N_i}
      \mathbb{E}_{\veca\sim\mu}
      [\matM_{ii'}(1,a_{i'})-\matM_{ii'}(0,a_{i'})]\\
    &=\sum_{i'\in N_i}\sum_{a_{i'}\in\{0,1\}}x_{i'}(a_{i'})
      \bigl(\matM_{ii'}(1,a_{i'})-\matM_{ii'}(0,a_{i'})\bigr)\\
    &=u_i(1,\vecx_{-i})-u_i(0,\vecx_{-i}).
\end{align*}
\end{proof}

\conditionalpayoffdifference*

\begin{proof}
Since $1/2-\alpha>0$, we consider the deviation map that switches supported recommendation $0$ precisely when $\mathbb{E}_\mu[d_i(\veca_{-i})\mid a_i=0]>0$, switches supported recommendation $1$ precisely when $\mathbb{E}_\mu[d_i(\veca_{-i})\mid a_i=1]<0$, and otherwise obeys. Its expected gain is the sum of the positive gains from the two recommendations, so $\delta$-CE obedience gives
\[
    \left(\frac12-\alpha\right)
    \sum_{a_i\in\supp(p_i^\mu)}p_i^\mu(a_i)
    \max\left\{(1-2a_i)\mathbb{E}_\mu[d_i(\veca_{-i})\mid a_i],0\right\}
    \le\delta.
\]

Suppose now that $p_i^\mu(0), p_i^\mu(1)>0$. By~\Cref{def:constant-expectation}, there exists $v_i$ such that
\begin{align*}
    \epsilon
    &\ge p_i^\mu(0)\left|\mathbb{E}_\mu[u_i'(\veca)\mid a_i=0]-v_i\right|
      +p_i^\mu(1)\left|\mathbb{E}_\mu[u_i'(\veca)\mid a_i=1]-v_i\right|\\
    &\ge\min\{p_i^\mu(0),p_i^\mu(1)\}
      \left|\mathbb{E}_\mu[u_i'(\veca)\mid a_i=1]
        -\mathbb{E}_\mu[u_i'(\veca)\mid a_i=0]\right|\\
    &=\min\{p_i^\mu(0),p_i^\mu(1)\}
      \left|\frac12\mathbb{E}_\mu[d_i(\veca_{-i})\mid a_i=1]
        -\alpha\mathbb{E}_\mu[d_i(\veca_{-i})\mid a_i=0]\right|,
\end{align*}
which proves the claim.
\end{proof}

This lemma, together with simple calculations, establishes the claimed equilibrium collapse, as we formalize below.

\marginalsarene*

\begin{proof}
By~\Cref{lem:marginal-payoff-difference}, we have that for any player $i$,
\[
    u_i(1,\vecx_{-i})-u_i(0,\vecx_{-i})
    =\mathbb{E}_\mu[d_i(\veca_{-i})].
\]
If $p_i^\mu(0)=1$, obedience gives
\begin{align*}
    \max_{a_i' \in \{0, 1\} }u_i(a_i',\vecx_{-i})-u_i(\vecx) =\max\{u_i(1,\vecx_{-i})-u_i(0,\vecx_{-i}),0\}
      \le\frac{\delta}{1/2-\alpha}.
\end{align*}
Similarly, if $p_i^\mu(1)=1$,
\begin{align*}
    \max_{a_i' \in \{0, 1\}}u_i(a_i',\vecx_{-i})-u_i(\vecx) =\max\{u_i(0,\vecx_{-i})-u_i(1,\vecx_{-i}),0\}
      \le\frac{\delta}{1/2-\alpha}.
\end{align*}
Since $2\alpha<1$, these bounds are at most $\epsilon/\alpha+\delta/(\alpha(1-2\alpha))$. If both probabilities are positive,
\begin{align*}
    u_i(1,\vecx_{-i})-u_i(0,\vecx_{-i}) =p_i^\mu(0)\mathbb{E}_\mu[d_i(\veca_{-i})\mid a_i=0]
      +p_i^\mu(1)\mathbb{E}_\mu[d_i(\veca_{-i})\mid a_i=1].
\end{align*}
If $u_i(1,\vecx_{-i})\ge u_i(0,\vecx_{-i})$, then
\begin{align*}
    \max_{a_i' \in \{0, 1\}} u_i(a_i',\vecx_{-i})-u_i(\vecx) &=p_i^\mu(0)\bigl(u_i(1,\vecx_{-i})-u_i(0,\vecx_{-i})\bigr)\\
    &=2p_i^\mu(0)p_i^\mu(1)
      \left(\frac12\mathbb{E}_\mu[d_i(\veca_{-i})\mid a_i=1]
        -\alpha\mathbb{E}_\mu[d_i(\veca_{-i})\mid a_i=0]\right)\\
    &\qquad+p_i^\mu(0)\bigl(p_i^\mu(0)+2\alpha p_i^\mu(1)\bigr)
      \mathbb{E}_\mu[d_i(\veca_{-i})\mid a_i=0]\\
    &\quad\le2\epsilon+\frac{\delta}{1/2-\alpha},
\end{align*}
where we used the fact that $p_i^\mu(0)p_i^\mu(1)\le\min\{p_i^\mu(0),p_i^\mu(1)\}$, $p_i^\mu(0)+2\alpha p_i^\mu(1)\le1$, and we applied both bounds from~\Cref{lem:conditional-payoff-difference}.
Similarly, if $u_i(1,\vecx_{-i})\le u_i(0,\vecx_{-i})$,
\begin{align*}
    \max_{a_i' \in \{0, 1\}}u_i(a_i',\vecx_{-i})-u_i(\vecx) &= p_i^\mu(1)\bigl(u_i(0,\vecx_{-i})-u_i(1,\vecx_{-i})\bigr)\\
    &= \frac{p_i^\mu(0)p_i^\mu(1)}{\alpha}
      \left(\frac12\mathbb{E}_\mu[d_i(\veca_{-i})\mid a_i=1]
        -\alpha\mathbb{E}_\mu[d_i(\veca_{-i})\mid a_i=0]\right)\\
    &\qquad -p_i^\mu(1)\left(p_i^\mu(1)+\frac{p_i^\mu(0)}{2\alpha}\right)
      \mathbb{E}_\mu[d_i(\veca_{-i})\mid a_i=1]\\
    &\le\frac{\epsilon}{\alpha}
      +\frac{\delta}{\alpha(1-2\alpha)},
\end{align*}
where we used the fact that $p_i^\mu(1)+p_i^\mu(0)/(2\alpha)\le1/(2\alpha)$ and~\Cref{lem:conditional-payoff-difference}. Since $\alpha<1/2$, the second upper bound also dominates the first, proving the claim.
\end{proof}

We put everything together below.

\constantexpectationhardness*

\begin{proof}
We consider constants $0<\epsilon<1/24$ and $0\le\delta<(1-24\epsilon)^2/96$, and set $\alpha=1/4+6\epsilon$ and $\eta=\epsilon/\alpha+\delta/(\alpha(1-2\alpha))$. Since $\alpha(1-2\alpha)>0$, we have
\[
    \frac{\epsilon}{\alpha}
    +\frac{\delta}{\alpha(1-2\alpha)}<\frac1{12}
    \quad\Longleftrightarrow\quad
    \delta<(\alpha/12-\epsilon)(1-2\alpha)
    =\frac{(1-24\epsilon)^2}{96}.
\]
Given an instance $\mathcal{G}$ of the problem in~\Cref{thm:ne-hardness} for this $\eta<1/12$, construct $\mathcal{G}'$ as above. The construction is polynomial. From any $(\epsilon,\delta)$-constant-expectation correlated equilibrium of $\mathcal{G}'$, we compute its marginals $\vecx$. By~\Cref{lem:marginals-are-ne}, $\vecx$ is an $\eta$-Nash equilibrium of $\mathcal{G}$, and the result follows from~\Cref{thm:ne-hardness}.
\end{proof}

Next, we provide the proofs pertaining to the two-player reduction, which starts from a rank-2 game.

\ranktwotransformation*

\begin{proof}
Normalization is immediate. The rank bound follows from the fact that
\[
    4K(u_1'(\veca)+u_2'(\veca))
    =(f_1(a_1)+K)(g_1(a_2)+2)
      +(f_2(a_1)+2)(g_2(a_2)+K).
\]
Moreover, for every deviation map $\phi_i:\Acal_i\to\Acal_i$,
\[
    \mathbb{E}_\mu[u_i'(\phi_i(a_i),\veca_{-i})-u_i'(\veca)]
    =\frac1{4K}\mathbb{E}_\mu[u_i(\phi_i(a_i),\veca_{-i})-u_i(\veca)],
\]
Thus, $\mu$ is a $\delta$-CE of $\mathcal{G}'$ if and only if it is a $(4K\delta)$-CE of $\mathcal{G}$.
\end{proof}

\ranktwocovariance*

\begin{proof}
For each player $i$, we have $\mathbb{E}_\mu[u_i(\veca)]
    \ge\max_{a_i'\in\Acal_i}u_i(a_i',\vecx_{-i})-\eta$ by applying $\eta$-CE obedience to constant deviation maps. Subtracting $u_i(\vecx)$, summing over both players, and using~\eqref{eq:rank-two-decomposition} gives the upper bound
\begin{align*}
    \sum_{i=1}^2\left(\max_{a_i'}u_i(a_i',\vecx_{-i})-u_i(\vecx)\right)
    &\le2\eta+\sum_{i=1}^2\left(\mathbb{E}_\mu[u_i(\veca)]-u_i(\vecx)\right)\\
    &=2\eta+\sum_{i=1}^2\operatorname{Cov}_\mu(f_i(a_1),g_i(a_2)).
\end{align*}
\end{proof}

In light of the previous lemma, it suffices to bound the covariance terms. This is done in the following lemma, crucially relying on the constant expectation property.

\ranktwoamplification*

\begin{proof}
We use the notation $u'_i(a_i)\defeq\mathbb{E}_\mu[u_i'(\veca)\mid a_i]$ for supported actions, and $\bar u_i\defeq\mathbb{E}_\mu[u_i'(\veca)]$. Since $\bar u_i =\sum_{a_i\in\supp(p_i^\mu)} p_i^\mu(a_i) u'_i(a_i)$, \Cref{def:constant-expectation} gives
\begin{equation}
\label{eq:averaged-centering}
    \sum_{a_i\in\supp(p_i^\mu)}p_i^\mu(a_i)| u'_i(a_i)-\bar u_i|
    \le\sum_{a_i\in\supp(p_i^\mu)}p_i^\mu(a_i)|u'_i(a_i)-v_i|
      +|v_i-\bar u_i|
    \le2\epsilon.
\end{equation}
Taking conditional and unconditional expectations in~\eqref{eq:rank-two-transformation}, and using $|u_1+2g_2|\le3$, gives
\[
    \left|\mathbb{E}_\mu[g_1(a_2)\mid a_1]-\mathbb{E}_\mu[g_1(a_2)]\right|
    \le4|u'_1(a_1)-\bar u_1|+\frac6K.
\]
Averaging over $a_1$ and applying~\eqref{eq:averaged-centering}, we obtain
\[
    \sum_{a_1\in\supp(p_1^\mu)}p_1^\mu(a_1)
    \left|\mathbb{E}_\mu[g_1(a_2)\mid a_1]-\mathbb{E}_\mu[g_1(a_2)]\right|
    \le8\epsilon+\frac6K.
\]
As a result, since $|f_1|\le1$,
\begin{align*}
    |\operatorname{Cov}_\mu(f_1(a_1),g_1(a_2))|
    &=\left|\mathbb{E}_\mu\left[f_1(a_1)
       \left(\mathbb{E}_\mu[g_1(a_2)\mid a_1]-\mathbb{E}_\mu[g_1(a_2)]\right)\right]\right|\\
    &\le8\epsilon+\frac6K.
\end{align*}
The bound with respect to Player~2 follows symetrically.
\end{proof}

\twoplayerconstantexpectationhardness*

\begin{proof}
Given a game $\mathcal{G}$ as above, we choose $\eta=2^{-p(L)}\in(0,1]$, where $L$ is its bit complexity and $p$ is a polynomial, sufficiently small that any rational $\eta$-Nash equilibrium can be converted in polynomial time into an exact Nash equilibrium; for example, this can be done through an LP~\citep{Etessami10:Complexity}. We set the parameters as $K \defeq 48/\eta$, $\epsilon \defeq \eta/32$, $\delta \defeq \eta^2/1536 = \eta/(32K)$, and construct $\mathcal{G}'$ using~\eqref{eq:rank-two-transformation}. The construction is polynomial in $L$.

Let $\mu$ be any $(\epsilon,\delta)$-constant-expectation correlated equilibrium of $\mathcal{G}'$, and let $\vecx$ be its marginal strategy profile. By~\Cref{lem:rank-two-transformation}, $\mu$ is a $(4K\delta)$-CE of $\mathcal{G}$. Therefore, \Cref{lem:rank-two-covariance,lem:rank-two-amplification} give
\[
    \sum_{i=1}^2\left(
        \max_{a_i'}u_i(a_i',\vecx_{-i})-u_i(\vecx)
    \right)
    \le16\epsilon+8K\delta+\frac{12}{K} = \eta,
\]
where the equality follows by our parameter choice. This proves that $\vecx$ is an $\eta$-Nash equilibrium, from which we can recover an exact Nash equilibrium of $\mathcal{G}$.
\end{proof}

\section{Hardness for independent state recommendations}
\label{sec:nfce-hardness}

In this section, we show that, in a sense, constant-expectation CEs are {\em necessary} to compute NFCEs in Markov games. Thus, applying our  hardness results (\Cref{thm:constant-expectation-hardness,thm:two-player-constant-expectation-hardness}), it will follow that computing an $\eps$-NFCE with independent state recommendations is \PPAD-hard for games with a constant number of players when $\eps$ is exponentially small, and for $n+1$ players when $\eps = \Theta(1/n)$.

\nfceTransfer*
\begin{proof}

Player $n+1$ will be called the {\em tester}. The Markov game is constructed as follows. At the first stage, a player $i \in [n]$ is selected uniformly at random. That player plays a game of matching pennies against the tester: if the pennies match, then $i$ gets $2$; else, the tester gets $2$. If the player played tails, the game ends. Otherwise, at stage $2$, the $n$ players (excluding the tester) play $\cG$ and gain the associated rewards. There are $n$ states at stage 1 and one state at stage 2, for a total of $n+1$ states. All unspecified rewards are zero, and inactive players have a single dummy action.

% Now consider any $\eps$-NFCE of $\cM$, where $\eps \le 1/8$. At the first stage, the selected player always plays heads with probability lower bounded by $1/4$; otherwise, the player and tester cannot be in an approximate equilibrium of that zero-sum game. Thus, $\cG$ is reached with probability at least $1/4$. Thus, 

% It remains to show that this correlated profile is also $\eps$-constant-expectation. 

Consider any $\eps$-NFCE of $\cM$ with independent state recommendations. (Throughout this proof, we use the \emph{ex-ante} deviation inequalities of an $\eps$-NFCE.) Let $\mu$ be the correlated profile recommended at $\cG$. We claim that $\mu$ is a $(4n\eps,4\eps)$-constant-expectation CE.
Let $v_i\defeq\mathbb{E}_{\veca\sim\mu}[u_i(\veca)]$. Independence across states implies that deviations affecting only the first-stage action, and ignoring the second-stage recommendation, have continuation value $v_i$ if $\cG$ is reached. If Player $i$ is selected, then Player $i$ and the tester play a game strategically equivalent to the zero-sum game with row utility matrix
\[
\renewcommand{\arraystretch}{1.25}
\setlength{\arraycolsep}{12pt}
\begin{array}{c|cc}
     & \mathsf{H}     & \mathsf{T} \\ \hline
\mathsf{H}    & \cellcolor{payoffgray} 2 + v_i & v_i \\
\mathsf{T}    & 0 & \cellcolor{payoffgray} 2
\end{array}
\]
where the player is the row player and the tester is the column player. %Indeed, subtracting $2+v_i\mathbf{1}\{\mathsf{H}\}$ from the tester's payoff makes its payoff the negative of the displayed matrix. 
% This term depends only on the row action, so the subtraction preserves the tester's deviation gains. 
The unique Nash equilibrium has the player playing heads with probability $1/2$, the tester playing heads with probability $(2-v_i)/4\in[1/4,1/2]$, and value $V_i\defeq 1 + v_i/2$ for $i$. Let $\tilde V_i$ be the row player's expected payoff conditional on reaching this state. 
% Let $U_i$ be the row player's expected payoff under the two first-stage recommendations, including the continuation value, and let 
The state is reached with probability $1/n$, so a deviation confined to this state has conditional gain at most $n\eps$. Therefore, we have
\[
    |V_i - \tilde V_i| \le n\eps.
\]
The tester's deviation to {\sf H} gives
\[
    \tilde V_i \le (2+v_i)q_i + n \eps,
\]
where $q_i$ is the probability that Player $i$ plays heads. Therefore,
\[
    q_i \ge \frac{\tilde V_i - n\eps}{2 + v_i} \ge \frac{V_i - 2n\eps}{2V_i} \ge \frac{1}{2} - n\eps \ge \frac{1}{4},
\]
where we use the inequalities $\tilde V_i \ge V_i - n\eps$, $V_i \ge 1$, and $n\eps \le 1/4$.
Therefore, $\cG$ is reached with probability at least $1/4$, and so $\mu$ is a $4\eps$-CE.

It remains to show the constant-expectation property. Let $p_i$ be the probability with which the tester is recommended to play heads against $i$. For every recommendation $a_i$ in the support of $p_i^\mu$, let
\[
    u_i(a_i)\defeq
    \mathbb{E}_{\veca\sim\mu}[u_i(\veca)\mid a_i].
\]
Consider a deviation that, only when $i$ is selected, ignores its first-stage recommendation and chooses heads or tails using only $a_i$, while obeying at $\cG$ if it is reached. By independence across states, heads has expected value $2p_i+u_i(a_i)$ and tails has expected value $2(1-p_i)$. Therefore, choosing the most profitable action and averaging over $a_i$ yields conditional utility
\begin{align*}
    V_i^* &= \sum_{a_i} p_i^\mu(a_i)
       \max\{2p_i+u_i(a_i),\ 2(1-p_i)\}
    =V_i+\frac12\sum_{a_i} p_i^\mu(a_i)
       |(2-4p_i)-u_i(a_i)|.
\end{align*}
This deviation cannot increase utility conditional on the selected state by more than $n\eps$. As a result, $V_i^*\le \tilde V_i+n\eps\le V_i+2n\eps$, and
\[
    \sum_{a_i\in\supp(p_i^\mu)} p_i^\mu(a_i)
    |(2-4p_i)-u_i(a_i)|\le 4n\eps.
\]
Therefore, choosing $v_i' \defeq 2 - 4 p_i$, and dividing all rewards by $4$ to satisfy the normalization assumption, completes the proof. We finally note that the construction preserves succinct payoff representations with polynomial overhead.
\end{proof}

\end{document}